\newcommand{\openone}{\leavevmode\hbox{\small1\normalsize\kern-.33em1}}
\def\UrlSpecials{\do\~{\kern -.15em\lower .7ex\hbox{~}\kern .04em}} \catcode`~=13
\newcommand{\nn}{\nonumber}
\newcommand{\calE}{\mathcal{E}}
\newcommand{\calH}{\mathcal{H}}
\newcommand{\calL}{\mathcal{L}}
\newcommand{\calN}{\mathcal{N}}
\newcommand{\calR}{\mathcal{R}}
\newcommand{\calS}{\mathcal{S}}
\newcommand{\bV}{\mathbf{V}}
\newcommand{\rmA}{\mathrm{A}}
\newcommand{\rmC}{\mathrm{C}}
\newcommand{\rme}{\mathrm{e}}
\newcommand{\rmj}{\mathrm{j}}
\newcommand{\rmJ}{\mathrm{J}}
\newcommand{\rmP}{\mathrm{P}}
\newcommand{\rmQ}{\mathrm{Q}}
\newcommand{\rms}{\mathrm{s}}
\newcommand{\rmV}{\mathrm{V}}
\newcommand{\bbN}{\mathbb{N}}
\newcommand{\bbR}{\mathbb{R}}
\DeclareMathAlphabet{\mathbsf}{OT1}{cmss}{bx}{n}
\DeclareMathAlphabet{\mathssf}{OT1}{cmss}{m}{sl}
\DeclareSymbolFont{bsfletters}{OT1}{cmss}{bx}{n}
\DeclareSymbolFont{ssfletters}{OT1}{cmss}{m}{n}
\DeclareMathSymbol{\bsfGamma}{0}{bsfletters}{'000}
\DeclareMathSymbol{\ssfGamma}{0}{ssfletters}{'000}
\DeclareMathSymbol{\bsfDelta}{0}{bsfletters}{'001}
\DeclareMathSymbol{\ssfDelta}{0}{ssfletters}{'001}
\DeclareMathSymbol{\bsfTheta}{0}{bsfletters}{'002}
\DeclareMathSymbol{\ssfTheta}{0}{ssfletters}{'002}
\DeclareMathSymbol{\bsfLambda}{0}{bsfletters}{'003}
\DeclareMathSymbol{\ssfLambda}{0}{ssfletters}{'003}
\DeclareMathSymbol{\bsfXi}{0}{bsfletters}{'004}
\DeclareMathSymbol{\ssfXi}{0}{ssfletters}{'004}
\DeclareMathSymbol{\bsfPi}{0}{bsfletters}{'005}
\DeclareMathSymbol{\ssfPi}{0}{ssfletters}{'005}
\DeclareMathSymbol{\bsfSigma}{0}{bsfletters}{'006}
\DeclareMathSymbol{\ssfSigma}{0}{ssfletters}{'006}
\DeclareMathSymbol{\bsfUpsilon}{0}{bsfletters}{'007}
\DeclareMathSymbol{\ssfUpsilon}{0}{ssfletters}{'007}
\DeclareMathSymbol{\bsfPhi}{0}{bsfletters}{'010}
\DeclareMathSymbol{\ssfPhi}{0}{ssfletters}{'010}
\DeclareMathSymbol{\bsfPsi}{0}{bsfletters}{'011}
\DeclareMathSymbol{\ssfPsi}{0}{ssfletters}{'011}
\DeclareMathSymbol{\bsfOmega}{0}{bsfletters}{'012}
\DeclareMathSymbol{\ssfOmega}{0}{ssfletters}{'012}
\newcommand{\hatW}{\hat{W}}
\newcommand{\baru}{\bar{u}}
\newcommand{\barv}{\bar{v}}
\newcommand{\barw}{\bar{w}}
\newcommand{\barP}{\bar{P}}
\newcommand{\barU}{\bar{U}}
\newcommand{\barV}{\bar{V}}
\newcommand{\barW}{\bar{W}}
\newcommand{\barZ}{\bar{Z}}
\DeclareMathOperator*{\argmax}{arg\,max}
\DeclareMathOperator*{\argmin}{arg\,min}
\DeclareMathOperator{\diag}{diag}
\newtheorem{theorem}{Theorem}
\newtheorem{lemma}[theorem]{Lemma}
\newtheorem{corollary}[theorem]{Corollary}
\newtheorem{definition}{Definition}
\newcommand{\myfoot}[1]{\footnote{\color{red}\bf #1}}
\definecolor{Dyellow}{RGB}{254,152,0}
\definecolor{Dgreen}{RGB}{0,176,80}
\begin{document}

\title{The Dispersion of Broadcast Channels With Degraded Message Sets Using Gaussian Codebooks}

\author{Zhuangfei Wu, Lin Bai, Jinpeng Xu, Lin Zhou and Mehul Motani
\thanks{This work has been partially presented at IEEE ICCC 2024~\cite{wu2024iccc}.}
\thanks{Z. Wu,  L. Bai, J. Xu and L. Zhou are with the School of Cyber Science and Technology, Beihang University, Beijing, China, 100191 (Emails: \{zhuangfeiwu, l.bai, xujinpeng, lzhou\}@buaa.edu.cn). }
\thanks{M. Motani is with the Department of Electrical and Computer
Engineering, National University of Singapore, Singapore, 117583 (Email:
motani@nus.edu.sg).}}

\maketitle
\flushbottom
\allowdisplaybreaks[1]

\begin{abstract}
We study the two-user broadcast channel with degraded message sets and derive second-order achievability rate regions. Specifically, the channel noises are not necessarily Gaussian and we use spherical codebooks for both users. The weak user with worse channel quality applies nearest neighbor decoding by treating the signal of the other user as interference. For the strong user with better channel quality, we consider two decoding schemes: successive interference cancellation (SIC) decoding and joint nearest neighbor (JNN) decoding. We adopt two performance criteria: separate error probabilities (SEP) and joint error probability (JEP). Counterintuitively, the second-order achievable rate regions under both SIC and JNN decoding are identical although JNN decoding usually yields better performance in other multiterminal problems with Gaussian noise. Furthermore, we generalize our results to the case with quasi-static fading and show that the asymptotic notion of outage capacity region is an accurate performance measure even at finite blocklengths.
\end{abstract}

\begin{IEEEkeywords}
Finite blocklength analysis, Mismatched communications, Successive interference cancellation, Information density decoding, Quasi-static fading
\end{IEEEkeywords}

\section{Introduction}\label{sec:intro}
With growing research interest in 6G networks and internet of things, low-latency communication is desired in order to serve delay-sensitive services~\cite{letaief2019roadmap,letaief2021edge}. Compared with  asymptotic analysis where the blocklength and the communication delay tend to infinity~\cite{shannon1948mathematical}, finite-blocklength analysis aims to characterize the theoretical performance of a communication system with low-latency. In particular, Polyanskiy, Poor and Verdú~\cite{polyanskiy2010finite} established the second-order asymptotics for point-to-point channel coding, which generalizes the seminal analyses of Strassen and Hayashi~\cite{hayashi2009information,strassen2009asymptotic}. The dominant backoff at the finite blocklength from the first-order asymptotic channel capacity is called the second-order term, and the coefficient for this term is known as \emph{channel dispersion}.

The above studies have promoted the development of finite blocklength analysis for network information theory\cite{TanBook,zhou2023monograph}. As a ubiquitous multi-terminal communication model, the broadcast channel (BC) pioneered by Cover~\cite{cover1972broadcast} consists of one transmitter and multiple receivers, where the transmitter aims to send a message to each receiver. Following Cover's work, Bergmans~\cite{bergmans1973broadcast,bergmans1974broadcast} derived the capacity region of the degraded BC where degraded means the channel quality of one user is always better than that of the other. Subsequently, Kaspi-Merhav~\cite{kaspi2011exponents} and Weng-Pradhan-Anastasopoulos~\cite{weng2008exponents} derived bounds on error exponents (a.k.a., the exponential decay rates of error probabilities) for BC with degraded message sets, also known as asymmetric BC (ABC), where the transmitter sends a common message to both receivers and sends an additional private message to only one of the receivers. Recently, Tan and Kosut~\cite{tan2014dispersions} characterized the second-order achievable rate region for the discrete memoryless ABC.

Although the above studies have comprehensively studied the ABC, they share one common limitation: the distribution of channel noise is assumed to be known in order to design and optimize the code. However, it is usually challenging to obtain the exact distribution of channel noise in practical systems. Instead, it is feasible to measure the statistics, i.e., the moments of the noise. Thus, one could design coding schemes with moments constraints. This direction was pioneered by Lapidoth~\cite{lapidoth1996} who proposed to use a nearest neighbor (NN) decoder with i.i.d. Gaussian codebooks or spherical codebooks for a memoryless channel with additive arbitrary noise. Such a coding scheme is known as \emph{mismatched} coding since the proposed coding scheme is not necessarily optimal for the additive noise distribution. Lapidoth derived the maximal achievable rate and showed that such a coding scheme is asymptotically optimal when specialized to the additive white Gaussian noise (AWGN) channel. Recently, Scarlett, Tan and Durisi~\cite{scarlett2017mismatch} refined Lapidoth's result by establishing the ensemble tight second-order asymptotic rate and also considering the  interference channel. In particular, the authors of~\cite{scarlett2017mismatch} showed that the spherical codebook has a strictly better second-order performance than the i.i.d. Gaussian codebook although the asymptotic achievable rates of both codebooks are identical.

However, the study of ABC with additive non-Gaussian noise remains unexplored. It is worthwhile to generalize~\cite{lapidoth1996,tan2014dispersions} to the ABC setting since ABC is a building block for downlink multi-user wireless communications. To do so, we propose a mismatched coding scheme and derive achievable second-order rate regions under both separate error probabilities (SEP) and joint error probability (JEP) criteria when either successive interference cancellation (SIC) decoding and joint nearest neighbor (JNN) decoding is used by the strong user. Furthermore, we generalize our results to the case with quasi-static fading~\cite[Section 5.4.1]{tse2005fundamentals}, where the random fading coefficients remain unchanged during the transmission of each codeword. As argued by~\cite{yang2014quasi}, quasi-static fading is valid for low-latency communication with  finite blocklength.

\subsection{Main Contributions}
We propose a coding scheme using spherical codebooks and a NN decoder. The weak user with worse channel quality applies nearest neighbor decoding by treating the signal of the other user as interference. For the strong user with better channel quality, we consider two decoding schemes: SIC and JNN decoding. We consider two types of performance criteria: SEP and JEP. Under the above setting, we derive achievable second-order rate regions.

Under SEP, we generalize the second-order asymptotic analyses of point-to-point mismatched coding~\cite{scarlett2017mismatch} to the ABC.  Counter-intuitively, the achievable second-order rate regions under both SIC and JNN decoding are identical although JNN decoding usually yields better performance in other multiterminal problems with Gaussian noise~\cite[Theorem 2]{yavas2021gaussian},\cite[Theorem 20]{zhou2016second}. Our achievability results also hold for a symmetric broadcast channel (SBC) with non-Gaussian noise, where each user decodes one private message and thus the message sets are symmetric. Under JEP, we establish similar results where SIC and JNN decoding have same second-order performance. Furthermore, numerically in Fig.~\ref{fig:L_SEP_JEP}, we show that the second-order region under JEP is slightly larger than SEP, which implies that JEP is a better criterion.

Finally, we generalize our results under SEP to the case with quasi-static fading  where both users have full channel state information (CSI). We remark that such a assumption is common in theoretical analysis \cite{zhou2019JSAC}~\cite{Hoydis2015fading}.We show that second-order rates are equal to zero, which is consistent with existing results~\cite{yang2014quasi,zhou2019JSAC}. Our results imply that the asymptotic notion of outage capacity and outage probability are valid at finite blocklength under the mismatched coding scheme.

\subsection{Organization of the Rest of the Paper}
The rest of paper is organized as follows. In Section~\ref{sec:problem_fomulation}, we set up the notation and formulate the ABC problem. In Section~\ref{sec:main_rsults}, we present achievable second-order asymptotics under both SEP and JEP. In Section~\ref{sec:fading}, we present corresponding results for the quasi-static fading setting. The proofs of our results are presented in Sections~\ref{sec:proof_sketch_SEP}-\ref{sec:proof_fading}. Finally, in Section~\ref{sec:conclusion}, we conclude the paper and discuss future research directions.

\section{Problem Formulation and Definitions}
\label{sec:problem_fomulation}
\subsection*{Notation}
Random variables are in capital case (e.g., $X$) and their realizations are in lower case (e.g., $x$). Random vectors of length $n$ and their particular realizations are denoted by $X^n:= (X_1, \ldots, X_n)$ and $x^n=(x_1,\ldots,x_n)$, respectively. We use calligraphic font (e.g., $\mathcal{X}$) to denote all sets.
We use $\bbR$, $\bbR_+$, $\bbN$ to denote the set of real numbers, positive real numbers and integers respectively. For any two integers $(a,b)\in\bbN^2$, we use $[a:b]$ to denote the set of integers between $a$ and $b$, and we use $[a]$ to denote $[1:a]$. For any $(a,b)\in\bbR^2$ and $n\in\bbN$, we use $(a,b)$ to denote the set of all real numbers between $a$ and $b$, and we use $(a,b)^n$ to denote the $n$-fold Cartesian product. We use logarithm with base $e$. The Gaussian complementary cumulative distribution function (cdf) and its inverse are denoted by $Q(\cdot)$ and $Q^{-1}(\cdot)$, respectively. We use $\| x^n \| = \sqrt {\sum\nolimits_i {x_i^2} } $ to denote the $\ell_2$ norm of a vector $x^n \in \mathbb{R}^n$. We use $1\{\cdot\}$ to denote the indicator function, i.e., $1\{\rmA\}=1$ if A is true and $1\{\rmA\}=0$ otherwise. For any $(x,\mu,\sigma)\in\bbR^3$, we use $\calN(x;\mu,\sigma^2)$ to denote a Gaussian distribution with mean $\mu$ and variance $\sigma^2$ with $x$ being a particular realization.

\subsection{Problem Formulation}
Fix any blocklength $n\in\bbN$. Consider the following two-user ABC with additive non-Gaussian noise:
\begin{align}
Y_1^n&=X^n+Z_1^n,\\
Y_2^n&=X^n+Z_2^n,
\end{align}
where $X^n$ is the channel input of the transmitter, $Y_1^n$ and $Y_2^n$ are the received signals of users 1 and 2, respectively. For each $i\in[2]$, $Z_{i}^n$ is the additive noise, which is generated i.i.d. from an arbitrary distribution $P_Z$ satisfying the following moment constraints
\begin{align}
\begin{array}{lll}
\mathbb{E}[Z_1^2]=\beta,\;&\zeta_1:=\mathbb{E}[Z_1^4]<\infty,\;&\mathbb{E}[Z_1^6]<\infty,\\
\mathbb{E}[Z_2^2]=1,\;&\zeta_2:=\mathbb{E}[Z_2^4]<\infty,\;&\mathbb{E}[Z_2^6]<\infty.
\end{array}\label{def:noise}
\end{align}
The constraints in \eqref{def:noise} are consistent with~\cite{scarlett2017mismatch,zhou2019jscc}, which are necessary to derive our theoretical benchmarks. Several common distributions of noise satisfy the constraints in \eqref{def:noise} including Gaussian, Laplace and uniform. Without loss of generality, we assume that the noise power of user 1 is $\beta\in(0,1)$, which is strictly smaller than the unity noise power of user 2. Thus, user 1 and 2 are the strong user and weak user, respectively.

For the weak user, nearest neighbor decoding and treating interference as noise are used. For the strong user, we consider both SIC and JNN decoding schemes. We first present the definition of a code when SIC is used by the strong user.
\begin{definition}\label{def:code_SIC}
An $(n,M_1,M_2)$-SIC code for a two user ABC consists of
\begin{itemize}
\item a set of $M_1$ codewords $\{V^n(i)\}_{i\in[M_1]}$ known by encoder and decoder of user 1
\item a set of $M_2$ codewords $\{U^n(i)\}_{i\in[M_2]}$ known by encoder and decoders of both users
\item encoder $f$ such for each $(w_1,w_2)\in[M_1]\times[M_2]$:
\begin{align}
f(w_1,w_2):&=X^n(w_1,w_2)\\
&=V^n(w_1)+U^n(w_2),
\end{align}
where $X^n(w_1,w_2)$ is the out put of the encoder $f$ and for each $i\in[2]$, $w_i$ is the message that the encoder sends to user $i$,
\item a decoder $\phi_2$ for user 2 that applies nearest neighbor decoding by treating the signal of user 1 as noise:
\begin{align}
\hatW_2:=\argmin_{w_2\in[M_2]}\|Y_2^n-U^n(w_2)\|^2, \label{def:phi2_NN}
\end{align}
\item a nearest neighbor SIC decoder $\phi_1$ for user 1 that operates as follows. Firstly, it decodes the message for user 2 while treating the signal of user 1 as noise  
\begin{align}
\barW_2:=\argmin_{w_2\in[M_2]}\|Y_1^n-U^n(w_2)\|^2. \label{def:phi1_TIN}
\end{align}
Subsequently, it decodes the message for itself using another NN decoder with SIC by subtracting  $U^n(\barW_2)$,
\begin{align}
\hatW_1:=\argmin_{w_1\in[M_1]}\|Y_1^n-U^n(\barW_2)-V^n(w_1)\|^2. \label{def:phi1_NN}
\end{align}
\end{itemize}
\end{definition}

An $(n,M_1,M_2)$-JNN code for a two user ABC is similar to an $(n,M_1,M_2)$-SIC code defined above, with the only difference being that we replace the SIC decoder in \eqref{def:phi1_TIN} and \eqref{def:phi1_NN} with the following JNN decoder
\begin{align}
(\hat{W}_1,\bar{W}_2):=\argmin_{(w_1,w_2)\in[M_1]\times[M_2]}\|Y_1^n-U^n(w_2)-V^n(w_1)\|^2.
\end{align}

The transmitter uses a spherical codebook for each user. Specifically, given any $x^n=(x_1,\ldots,x^n)\in\bbR^n$ and $P\in\bbR_+$, we define the following uniform distribution over the surface of a sphere with center $\mathbf{0}^n$ and radius $\sqrt{nP}$:
\begin{align}
f_{\mathrm{sp}}(x^n|P)=\frac{1\{\|x^n\|^2-nP\}}{S_n(\sqrt{nP})},
\end{align}
where $1\{\cdot\}$ is the indicator function, $S_n(r)=n\pi^{n/2}r^{n-1}/\Gamma( {\frac{n+2}{2}})$ is the surface area of an $n$-dimensional sphere with radius $r$, and $\Gamma(\cdot)$ is the Gamma function. We choose spherical codebooks rather than i.i.d. Gaussian codebooks since the latter one was shown to have worse finite $n$ performance in the point-to-point non-Gaussian noise channel~\cite[Section II. C]{tan2014dispersions}. Let $\alpha\in(0,1)$ and $\bar{\alpha}=1-\alpha$. The codewords $\{V^n(i)\}_{i\in[M_1]}$ for user 1 and $\{U^n(i)\}_{i\in[M_2]}$ for user 2 are generated independently from the following distributions: for any $(v^n.u^n)\in(\bbR^n)^2$,
\begin{align}
P_{V^n}(v^n)&:= f_{\mathrm{sp}}(v^n|\alpha P),\label{def:PV^n}\\
P_{U^n}(u^n)&:= f_{\mathrm{sp}}(u^n|\bar{\alpha} P). \label{def:PU^n}
\end{align}
With the above choice of the power of the codewords, the sum power of the transmit signal equals to $nP$ for an $(n,M_1,M_2)$-SIC or an $(n,M_1,M_2)$-JNN code.

To evaluate the performance of the coding scheme, we consider ensemble separate error probabilities (SEP):
\begin{align}
\rmP_{\rme,1}^n&:=\Pr\{\hatW_1\neq W_1\;\mathrm{or}\;\barW_2\neq W_2\},\label{def:Pe1}\\
\rmP_{\rme,2}^n&:=\Pr\{\hatW_2\neq W_2\},\label{def:Pe2}
\end{align}
and ensemble joint error probability (JEP):
\begin{align}
\rmP_{\rme,\rmJ}^n:=\Pr\{\hatW_1\neq W_1\;\mathrm{or}\;\barW_2\neq W_2 \;\mathrm{or}\;\hatW_2\neq W_2\}. \label{def:PeJ}
\end{align}
The above probability terms average not only over the distribution of the messages $(W_1,W_2)$, but also over the distribution of the codewords $\{V^n(i)\}_{i\in[M_1]}$ for user 1 and $\{U^n(i)\}_{i\in[M_2]}$ and channel noises $(Z_1^n,Z_2^n)$. These definitions are consistent with existing studies on mismatched communication~\cite{lapidoth1996,lapidoth1997,scarlett2017mismatch,zhou2018refined}.

When the private message for user 2 is absent, the error probabilities of two users are asymmetric. In this case, \eqref{def:Pe1} and \eqref{def:PeJ} are replaced by
\begin{align}
\bar\rmP_{\rme,1}^n&:=\Pr\{\hatW_1\neq W_1\}, \label{def:barPe1}\\
\bar\rmP_{\rme,J}^n&:=\Pr\{\hatW_1\neq W_1,\hatW_2\neq W_2\}, \label{def:barPeJ}
\end{align}
respectively, and ABC is now a SBC. We remark that our achievability results also hold for SBC (see discussions in the second to last remark of Theorem~\ref{theo:SEP}).

\subsection{Achievable Rate Region}
The achievable rate region of ABC collects the rate pairs of message pairs such that our codes in Definition \ref{def:code_SIC} ensures vanishing ensemble error probabilities. Let $\dagger:=\{\mathrm{SIC},\mathrm{JNN}\}$. The achievable rate region under SEP is defined as follows.
\begin{definition}\label{def:first_region}
A rate pair $(R_1,R_2)\in\bbR_+^2$ is said to be $(P,\dagger)$-achievable for mismatched ABC under SEP if there exists a sequence of $(n,M_1,M_2)$-$\dagger$ codes such that
\begin{align}
\limsup_{n\to\infty}\frac{1}{n}\log M_i&\geq R_i,
\end{align}
and
\begin{align}
&\max_{i\in[2]}\lim_{n\to\infty}\rmP_{e,i}^n=0.
\end{align}
The convex closure of the set of all $(P,\dagger)$-achievable rate pairs under SEP is denoted by $\calR^{\rm{SEP}}_{\dagger}(P)$.
\end{definition}

We next define the achievable rate region under JEP.
\begin{definition}
A rate pair $(R_1,R_2)\in\bbR_+^2$ is said to be $(P,\dagger)$-achievable for mismatched ABC under JEP if there exists a sequence of $(n,M_1,M_2)$-$\dagger$ codes such that
\begin{align}
\limsup_{n\to\infty}\frac{1}{n}\log M_i&\geq R_i,
\end{align}
and
\begin{align}
&\lim_{n\to\infty}\rmP_{\rme,\rmJ}^n=0.
\end{align}
The convex closure of the set of all $(P,\dagger)$-achievable rate pairs under JEP is denoted by $\calR^{\rm{JEP}}_{\dagger}(P)$.
\end{definition}

Now, we present the achievable rate region under SEP. Let $\rmC(P):=\frac{1}{2}\log(1+P)$ denote the Gaussian channel capacity. As a corollary of our results in Theorem~\ref{theo:SEP} by letting $n\to\infty$, the following inner bound to $\calR_{\dagger}^{\rm{SEP}}(P)$ holds:
\begin{align}
\nn\calR_{\mathrm{inner}}(P):=&\bigcup_{\alpha\in(0,1)}\Big\{(R_1,R_2):\\*
&\quad R_1\leq\rmC\Big(\frac{\alpha P}{\beta}\Big),
R_2\leq\rmC\Big(\frac{\bar\alpha P}{\alpha P+1}\Big)\Big\}.
\end{align}
Note that the above rate-region also holds for JEP since $\rmP_{\rme,\rmJ}^n\leq\rmP_{\rme,1}^n+\rmP_{\rme,2}^n$. In Fig. \ref{fig:L1L2_plot}, we numerically illustrate the first-order achievable region $\calR_{\mathrm{inner}}(P)$. Note that $\calR_{\mathrm{inner}}(P)$ is the optimal achievable region for the Gaussian BC~\cite{bergmans1974broadcast}.

\begin{figure}
\centering
\includegraphics[width=.8\columnwidth]{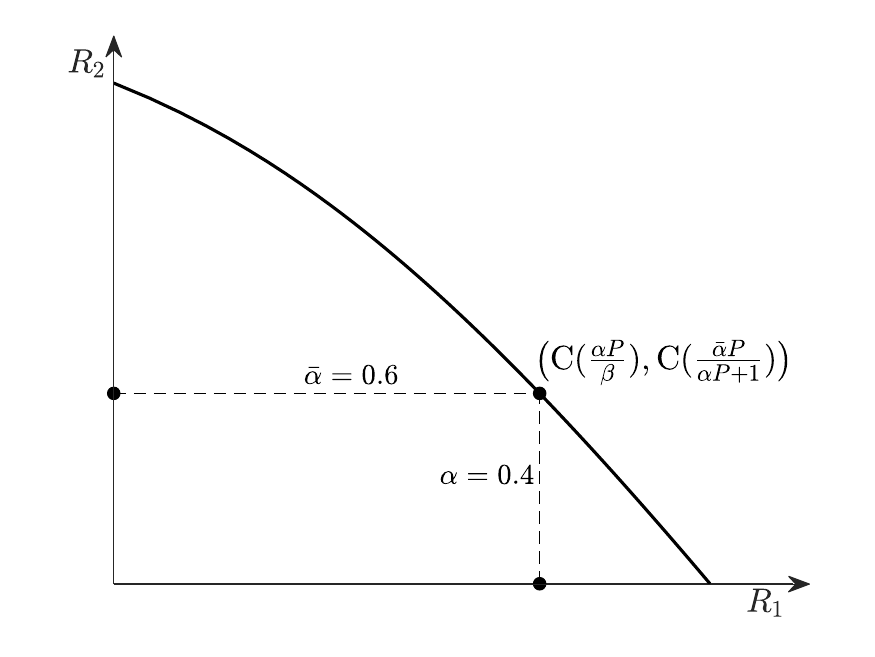}
\caption{Illustration of the achievable rate region. Given a particular pair $(\alpha,\bar\alpha)$, the achievable region is characterized by a rectangular. The whole rate region is obtained by taking the union over all $\alpha\in(0,1)$.}
\label{fig:L1L2_plot}
\end{figure}

We now present the definition of the second-order asymptotics under the SEP criterion.
\begin{definition}
Given any $(\varepsilon_1,\varepsilon_2)\in(0,1)^2$, $P\in\bbR_+$ and $\alpha\in(0,1)$, a pair $(L_1,L_2)\in\bbR_+^2$ is said to be second-order $(\varepsilon_1,\varepsilon_2,\alpha,P,\dagger)$-achievable under SEP if there exists a sequence of $(n,M_1,M_2)$-$\dagger$ codes such that
\begin{align}
\limsup_{n\to\infty}\frac{1}{\sqrt{n}}\left(n\rmC\Big(\frac{\alpha P}{\beta}\Big)-\log M_1\right)&\leq L_1,\\
\limsup_{n\to\infty}\frac{1}{\sqrt{n}}\left(n\rmC\Big(\frac{\bar\alpha P}{\alpha P+1}\Big)-\log M_2\right)&\leq L_2,
\end{align}
and
\begin{align}
\lim_{n\to\infty}\rmP_{\rme,i}^n\leq\varepsilon_i~\forall~i\in[2].
\end{align}
The convex closure of the set of all second-order $(\varepsilon_1,\varepsilon_2,\alpha,P,\dagger)$-achievable rate pairs under SEP is denoted by $\calL_{\rm{SEP}}(\varepsilon_1,\varepsilon_2,\alpha,P,\dagger)$.
\end{definition}

Finally, we present the definition of the second-order asymptotics under the JEP criterion.
\begin{definition}\label{def:JEP_achL1L2}
For $i\in[2]$, given any $\varepsilon\in(0,1)$ and $\alpha\in(0,1)$, a pair $(L_1,L_2)\in\bbR_+^2$ is said to be second-order $(\varepsilon,\alpha,P,\dagger)$-achievable under JEP if there exists a sequence of $(n,M_1,M_2,\alpha,P)$-$\dagger$ codes such that
\begin{align}
\limsup_{n\to\infty}\frac{1}{\sqrt{n}}\left(n\rmC\Big(\frac{\alpha P}{\beta}\Big)-\log M_1\right)&\leq L_1,\\
\limsup_{n\to\infty}\frac{1}{\sqrt{n}}\left(n\rmC\Big(\frac{\bar\alpha P}{\alpha P+1}\Big)-\log M_2\right)&\leq L_2,
\end{align}
and
\begin{align}
\lim_{n\to\infty}\rmP_{\rme,\rmJ}^n\leq\varepsilon,
\end{align}
The convex closure of the set of all second-order $(\varepsilon,\alpha,P,\dagger)$-achievable rate pairs under JEP is denoted by $\calL_{\rm{JEP}}(\varepsilon,\alpha,P,\dagger)$.
\end{definition}

\section{Main Results}\label{sec:main_rsults}
In this section, we present our main results under SEP. To do so, we need the following definitions. First, define the following two dispersion functions:
\begin{align}
\rmV_1(P,\beta,\zeta)&:=\frac{P^2(\zeta-\beta^2)+4P\beta^3}{4\beta^2(P+\beta)^2}, \label{def:V1()}\\
\rmV_2(P,\barP,\beta,\zeta)&:=\frac{P^2(\zeta-\beta^2+4\barP)+4P(\barP +\beta)^3}{4(\barP+\beta)^2(P+\barP+\beta)^2}. \label{def:V2()}
\end{align}
Recall $(\zeta_1,\zeta_2)$ in constraints of the noise distribution in \eqref{def:noise}. We will use $\rmV_1(\alpha P,\beta,\zeta_1)$ and $\rmV_2(\bar\alpha P,\alpha P,1,\zeta_2)$, which are abbreviated as $\rmV_1(\alpha P)$ and $\rmV_2(\bar\alpha P, \alpha P)$, respectively for ease of notation. We also need the following dispersion matrix:
\begin{align}
\bV(\alpha,P)&:=
\begin{bmatrix}
\rmV_1(\alpha P)&0\\
0&\rmV_2(\bar\alpha P,\alpha P)
\end{bmatrix}. \label{def:bfV}
\end{align}
Finally, given any $d\in\bbN$ and any $\varepsilon\in(0,1)$, we need the following complementary Gaussian cumulative distribution function (ccdf) of a  Gaussian random vector with zero mean vector and the covariance matrix $\bV$:
\begin{align}
\rmQ_{\mathrm{inv}}(\mathbf{V},\varepsilon):=\{\mathbf{s}_d\in\mathbb{R}^d: \Pr\{\mathbf{S}_d\leq\mathbf{s}_d\}\geq1-\varepsilon\}, \label{def:Qinv}
\end{align}
where $\mathbf{S}_d\sim\calN_d(\mathbf{0},\mathbf{V})$ denotes the $d$-dimension Gaussian random vector.

An achievability result for the second-order region of the mismatched ABC under SEP is characterized by following theorem.
\begin{theorem}\label{theo:SEP}
Given any $(\varepsilon_1,\varepsilon_2)\in(0,1)^2$, $P\in\bbR_+$ and $\alpha\in(0,1)$, for any $\dagger\in(\mathrm{SIC},\mathrm{JNN})$,
\begin{align}
\nn\Big\{(L_1,L_2):L_1&\geq \sqrt{\rmV_1(\alpha P)}\rmQ^{-1}(\varepsilon_1),\\
\nn L_2&\geq \sqrt{\rmV_2(\bar\alpha P,\alpha P)}\rmQ^{-1}(\varepsilon_2)\Big\}\\
&\qquad\subseteq\calL_{\rm{SEP}}(\varepsilon_1,\varepsilon_2,\alpha,P,\dagger).
\end{align}
\end{theorem}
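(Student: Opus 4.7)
The plan is to analyze the two users' error probabilities separately and reduce each to a point-to-point mismatched coding problem whose second-order asymptotics are already known from Scarlett--Tan--Durisi~\cite{scarlett2017mismatch}. Because the two codebooks $\{V^n(i)\}$ and $\{U^n(i)\}$ are generated independently on their respective spheres, when the decoder of interest conditions on the ``interfering'' codeword being equal to the transmitted one, the residual signal is the sum of a uniform-on-sphere codeword plus the noise $Z_1^n$ or $Z_2^n$. A moment computation then shows that this effective noise has second moment (variance) and fourth moment exactly matching the parameters inside $\rmV_1(\cdot)$ and $\rmV_2(\cdot,\cdot)$. These are precisely the dispersions one obtains by applying the point-to-point NN dispersion formula of~\cite{scarlett2017mismatch} to each induced channel.

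I would first handle user~2. Its received signal is $Y_2^n = V^n + U^n + Z_2^n$ and the NN decoder in \eqref{def:phi2_NN} treats $V^n + Z_2^n$ as noise. This is a point-to-point mismatched channel with input on the sphere of radius $\sqrt{n\bar\alpha P}$ and additive noise of second moment $\alpha P + 1$ and finite higher moments (the sixth moment in \eqref{def:noise} is what legitimises the third-order $O(\log n/\sqrt n)$ remainder in the Berry-Esseen step). Applying~\cite[Theorem~1]{scarlett2017mismatch} with these parameters yields the $L_2$ bound in terms of $\sqrt{\rmV_2(\bar\alpha P,\alpha P)}\rmQ^{-1}(\varepsilon_2)$. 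For user~1 under SIC, I analyze the two stages in \eqref{def:phi1_TIN}--\eqref{def:phi1_NN} successively. In the first stage the effective noise $V^n + Z_1^n$ has second moment $\alpha P+\beta<\alpha P+1$, so the first-stage capacity is strictly larger than $\rmC(\bar\alpha P/(\alpha P+1))$. Hence at the rate $R_2 = \rmC(\bar\alpha P/(\alpha P+1)) - L_2/\sqrt n + o(1/\sqrt n)$ the first-stage error probability decays to $0$ (in fact exponentially), contributing only an $o(1)$ term and hence no second-order backoff. Conditioning on $\barW_2=W_2$, the second stage is a point-to-point mismatched channel with input power $\alpha P$ and noise $Z_1^n$ of second moment $\beta$ and fourth moment $\zeta_1$; invoking~\cite[Theorem~1]{scarlett2017mismatch} again gives the $L_1$ bound in terms of $\sqrt{\rmV_1(\alpha P)}\rmQ^{-1}(\varepsilon_1)$, and the joint $(L_1,L_2)$ region follows because the errors of the two users depend on independent noises.

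The main obstacle, and the reason the result is stated as a single theorem for both $\dagger\in\{\mathrm{SIC},\mathrm{JNN}\}$, is establishing that JNN in \eqref{def:phi1_NN_JNN-style} gives \emph{no improvement} at second order. I would expand the JNN error event into three atypical-pair types: (i) $(w_1,W_2)$ with $w_1\neq W_1$, (ii) $(W_1,w_2)$ with $w_2\neq W_2$, and (iii) $(w_1,w_2)$ with both indices wrong. Type~(i) is precisely the event analyzed in the SIC step for $W_1$, so it contributes the $\sqrt{\rmV_1(\alpha P)}\rmQ^{-1}(\varepsilon_1)$ term. For types~(ii) and~(iii), a change-of-measure / truncated-union-bound argument along the lines of~\cite[Section~IV]{scarlett2017mismatch} bounds them by exponents strictly dominated by the rate region $R_2 < \rmC(\bar\alpha P/(\alpha P+\beta))$, which the choice $R_2 < \rmC(\bar\alpha P/(\alpha P+1))$ satisfies with a positive gap because $\beta<1$; thus these terms contribute only $o(1/\sqrt n)$ corrections and do not affect the second-order rate. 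Hence the JNN error probability and the SIC error probability for user~1 coincide to first and second order, which is the counter-intuitive coincidence noted in the introduction. Closing the argument by combining the per-user Gaussian approximations and taking the convex closure yields the claimed inclusion.
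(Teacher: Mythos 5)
Your proposal follows essentially the same architecture as the paper's proof (Section~\ref{sec:proof_sketch_SEP}): user 2 is handled by reducing the treat-interference-as-noise NN decoder to the mismatched analysis of \cite{scarlett2017mismatch}; user 1 under SIC is split into a first stage whose error vanishes because $\beta<1$ yields a $\Theta(n)$ rate margin over the choice $\log M_2\approx n\rmC(\bar\alpha P/(\alpha P+1))$, plus an interference-free second stage giving $\sqrt{\rmV_1(\alpha P)}\rmQ^{-1}(\varepsilon_1)$; and JNN is decomposed into the three index-error types, with the two types involving a wrong $W_2$ rendered inactive by the same $\beta<1$ slack, so that user 1's second-order term is dictated solely by the $M_1$ constraint, exactly as under SIC. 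This is the paper's argument (RCU bounds, the vector of mismatched information densities, and the Berry--Esseen theorem for functions).

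The one step that would fail as written is your justification of the user-2 dispersion. The effective noise $V^n+Z_2^n$ is \emph{not} i.i.d.: $V^n$ is a spherical codeword with $\|V^n\|^2=n\alpha P$ deterministic, so you cannot feed the ``induced channel'' with matched second and fourth moments into the point-to-point formula of \cite[Theorem~1]{scarlett2017mismatch}. If you did (e.g., replacing the interference by an i.i.d.\ Gaussian surrogate of the same power), the factor $\zeta_2-1+4\alpha P$ appearing in $\rmV_2(\bar\alpha P,\alpha P)$ would be inflated to $\zeta_2-1+4\alpha P+2(\alpha P)^2$; the missing $2(\alpha P)^2$ is precisely the variance of the $V_i^2$ terms that the spherical constraint eliminates, and this gap is exactly why the paper insists on spherical rather than i.i.d.\ Gaussian codebooks. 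The correct tool, and what the paper invokes, is the interference-channel result of \cite[Section~III, Theorem~2]{scarlett2017mismatch} (equivalently, a Berry--Esseen-for-functions computation that treats the cross terms $\langle U^n,V^n\rangle$ and $\langle V^n,Z_2^n\rangle$ jointly); with that substitution your user-2 step, and hence the whole proof, goes through. Two minor further points: in the JNN analysis the threshold for the event where only $W_2$ is wrong is $\rmC(\bar\alpha P/\beta)$ (user 1's interference-free channel), while the slack $\rmC(\bar\alpha P/(\alpha P+\beta))-\rmC(\bar\alpha P/(\alpha P+1))$ you cite is the one needed for the both-indices-wrong (sum-rate) event, cf.\ \eqref{SEP:JNN_factM2}--\eqref{SEP:JNN_factM1M2}; and exponential decay of the first SIC stage is not needed---a Chebyshev-type bound of order $O(1/n)$, as in \eqref{SEP:phi_1_useM2}, suffices.
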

The proof of Theorem~\ref{theo:SEP} is provided in Section \ref{sec:proof_sketch_SEP}. We make the following remarks.

The proof of Theorem~\ref{theo:SEP} under SIC extends the proof of the mismatched point-to-point channel~\cite[Theorem 1]{scarlett2017mismatch} and interference channel~\cite[Theorem 2]{scarlett2017mismatch} to the ABC setting. Specifically, user 2 tries to recover the message $W_2$ while treats $V^n$ as noise, which is equivalent  to the receiver of the mismatched interference channel~(cf. \cite[Section III]{scarlett2017mismatch}) with one interference signal. Note that user 1 adopts the SIC scheme by firstly operating as the same manner to user 2, then operating as a point-to-point decoder after subtracting the signal for user 2. However, the proof of the case under JNN does not directly follow from that of SIC, since the decoder $\psi_2$ aims to recover both messages $W_1$ and $W_2$ simultaneously. Inspired by the analysis of multiple access channel with non-Gaussian noise~\cite[Theorem 3]{wang2023achievable}, we manage to solve the problem by extending the analysis of \cite[Theorem 3]{wang2023achievable} to the ABC setting.

Theorem 1 shows that the second-order rate regions under SIC and JNN are identical, which was somewhat surprising since the joint decoding scheme usually has better performance than a separate scheme. Note that the decoding scheme for user 2 is exactly the same, which ensures the same second-order rate $L_2$ for user 2. In contrast, for user 1 under SIC, the first step where the codewords $V^n$ are treated as interference is asymptotically error free since the channel of user 1 is strictly better than user 1. Thus, the error event mainly results from the process recovering $W_2$ which leads to a bound on $L_1$. For user 1 under JNN, the decoder $\psi_2$ searches the indices $(\hatW_1,\barW_2)$ in a two-dimensional surface in order to decode the message for both users simultaneously. Such a scheme is supposed to generate a constraint of $\log M_1$ and $\log M_2$.  However, user 1 has a better channel than user 2 while the size of the code designed for user 2, denoted by $\log M_2$, is determined by the ``worse'' channel. Thus, the constraint of $\log M_2$ is satisfied almost surely. This way, the bound on $\log M_1$ is determined by a one-dimensional search, which is identical to the case under SIC. Such a result implies that one should always apply the SIC decoding scheme since the computational complexity of SIC decoding is $O(n)$, which is much smaller than the complexity of JNN decoding is $O(n^2)$.

Our result can be generalized to the SBC, where each user decodes one message only (cf. \eqref{def:Pe2} and \eqref{def:barPe1}). Our achievability proof holds since the error probability of user 1 for ABC in \eqref{def:Pe1} is a natural upper bound for the error probability of user 1 for a SBC in  \eqref{def:barPe1}.

When specialized to a ABC with additive Gaussian noise, it follows that $\zeta_1=3\beta^2$ and $\zeta_2=3$. Such a result generalizes the second-order asymptotics for a discrete memoryless ABC~\cite[Section IV]{tan2014dispersions} to the Gaussian setting.  Furthermore, by taking $n\to\infty$, our results recovers the first-order asymptotic region~\cite{bergmans1973broadcast,bergmans1974broadcast}.

We next present our results under JEP.
\begin{theorem}\label{theo:JEP}
Given any $\varepsilon\in(0,1)$, $P\in\bbR_+$ and $\alpha\in(0,1)$, for any $\dagger\in(\mathrm{SIC},\mathrm{JNN})$,
\begin{align}
&\nn\Big\{(L_1,L_2):(L_1,L_2)\in\rmQ_{\mathrm{inv}}(\mathbf{V}(\alpha,P),\varepsilon)\Big\}\\
&\qquad\qquad\qquad\qquad\qquad\subseteq\calL_{\mathrm{JEP}}(\varepsilon,\alpha,P,\dagger). \label{JEP:theorem}
\end{align}
\end{theorem}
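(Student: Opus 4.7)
The plan is to build directly on the proof of Theorem~\ref{theo:SEP} but to track the \emph{joint} behavior of the two users' decoders instead of bounding their marginal error probabilities separately. In the SEP proof, each marginal error probability $\rmP_{\rme,i}^n$ is reduced, via a Berry--Esseen argument applied to i.i.d.\ sums, to a one-dimensional Gaussian tail of variance $\rmV_i$. For JEP I will introduce a pair of joint statistics $(T_1^n,T_2^n)\in\bbR^2$ measuring the finite-blocklength shortfall at each user, chosen so that the correct-decoding event $\{\hatW_1=W_1,\,\barW_2=W_2,\,\hatW_2=W_2\}$ contains $\{T_1^n\le L_1,\,T_2^n\le L_2\}$ up to an event of probability $o(1)$; the JEP is then controlled by the joint upper-tail probability of the pair $(T_1^n,T_2^n)$.

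The central step is a bivariate Berry--Esseen bound showing that $(T_1^n,T_2^n)$ converges in distribution to $\calN_2(\mathbf{0},\bV(\alpha,P))$, with $\bV(\alpha,P)$ the diagonal matrix defined in \eqref{def:bfV}. The strong-user statistic $T_1^n$ is essentially a function of $Z_1^n$ together with the spherical codewords, whose $\ell_2$-norms are deterministic by construction of $P_{V^n}$ and $P_{U^n}$ in \eqref{def:PV^n}--\eqref{def:PU^n}. The weak-user statistic $T_2^n$ is a function of $Z_2^n$, the interference codeword $V^n(W_1)$, and $U^n(W_2)$. Since $Z_1^n\indep Z_2^n$ and the codeword norms are pinned to the sphere, conditioning on $(V^n,U^n)$ the two statistics become functions of independent noise vectors, so the conditional covariance vanishes; and because $\bbE[T_1^n\mid V^n,U^n]$ depends on the codewords only through rotation-invariant quantities, it is essentially constant, which kills the second term of the law of total covariance as well. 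This mechanism is what produces the diagonal $\bV(\alpha,P)$ and motivates stating the region in terms of $\rmQ_{\mathrm{inv}}(\bV(\alpha,P),\varepsilon)$ defined in \eqref{def:Qinv}. I expect the main technical obstacle to be controlling cross-terms such as $\langle V^n(W_1),Z_2^n\rangle$ that couple $T_2^n$ to the shared codeword; these will be handled exactly as in the proofs of~\cite[Theorem 1]{scarlett2017mismatch} and~\cite[Theorem 3]{wang2023achievable}, now performed jointly on $\bbR^2$ rather than marginally.

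Given the joint Gaussian limit, I will conclude by writing
\begin{align}
\rmP_{\rme,\rmJ}^n \le 1-\Pr\{S_1\le L_1,\,S_2\le L_2\}+o(1),
\end{align}
where $(S_1,S_2)\sim\calN_2(\mathbf{0},\bV(\alpha,P))$, and observing that any $(L_1,L_2)\in\rmQ_{\mathrm{inv}}(\bV(\alpha,P),\varepsilon)$ makes the right-hand side at most $\varepsilon+o(1)$, establishing the inclusion \eqref{JEP:theorem}. For the JNN decoder I will reuse the reduction already employed in the proof of Theorem~\ref{theo:SEP}: because the strong user enjoys a strictly better channel than the weak user, the two-dimensional JNN search over $(w_1,w_2)$ collapses, in the second-order expansion, to a one-dimensional search over $w_1$ once the correct $w_2$ has been (effectively) identified through the $U^n$ codebook. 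Consequently $(T_1^n,T_2^n)$ has the same bivariate Gaussian limit under both decoders, so the JEP region \eqref{JEP:theorem} is achievable under SIC and JNN alike.
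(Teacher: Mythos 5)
Your proposal follows essentially the same route as the paper's proof: reduce the joint error probability (via the RCU machinery inherited from the SEP analysis, with the strong user's extra constraints on $W_2$ shown to be asymptotically inactive because $\beta<1$) to a single bivariate tail probability, apply a Berry--Esseen theorem for functions of i.i.d.\ vectors to get the limit $\calN_2(\mathbf{0},\bV(\alpha,P))$, and conclude through $\rmQ_{\mathrm{inv}}(\bV(\alpha,P),\varepsilon)$. The only cosmetic difference is that you justify the diagonal covariance by a law-of-total-covariance argument conditioning on the spherical codewords, whereas the paper verifies it by an explicit Jacobian--covariance computation for an eight-dimensional i.i.d.\ vector; both yield $\bV(\alpha,P)$.
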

The proof of Theorem~\ref{theo:JEP} is provided in Section \ref{sec:proof_JEP}. A few remarks are in order.

The proof of Theorem~\ref{theo:JEP} are inspired by the mismatched multiple access channel in~\cite[Theorem 3]{wang2023achievable}. Specifically, we firstly generalize the random coding union (RCU) bound~\cite[Theorem 16]{polyanskiy2010finite} to the ABC setting under both SIC and JNN decoding schemes. Subsequently, we upper bound the error probability and establish the relationship between the error probability and the size of the codebooks. Note that the mismatched ABC is a degraded broadcast channel~\cite[Chapter 5.5]{el2011network} since the signal received by the weak user can be considered as the signal received by the strong user after passing through channel with power $1-\beta$. Due to this nature, the constraints on the sum rate can be reduced if the constraints on $M_1$ and $M_2$ are satisfied simultaneously (cf. \eqref{JEP:JNN_factM2}-\eqref{JEP:JNN_donesimplify}).

Next we compare the results under SEP and JEP criteria. Specifically, if $L_2\to\infty$, the left hand side of \eqref{JEP:theorem} degrades to
\begin{align}
\big\{(L_1,L_2):L_1\geq \sqrt{\rmV_1(\alpha P)}\rmQ^{-1}(\varepsilon)\big\}.\label{JEP:theo_degradedL2}
\end{align}
Correspondingly, if $L_1\to\infty$, the left hand side of \eqref{JEP:theorem} degrades to
\begin{align}
\big\{(L_1,L_2):L_2\geq \sqrt{\rmV_2(\bar\alpha P,\alpha P)}\rmQ^{-1}(\varepsilon)\big\}. \label{JEP:theo_degradedL1}
\end{align}
Letting $\varepsilon=\varepsilon_1+\varepsilon_2$, the above results match the achievable under SEP. In Fig.~\ref{fig:L_SEP_JEP}, we numerically illustrate this point.

We next discuss non-extreme case where neither $L_1$ nor $L_2$ tends to infinity. Note that the dispersion matrix $\bV(\alpha,P)$ in \eqref{def:bfV} is diagonal, which seems to imply that the second-order rates of user 1 and 2 are independent. However, taking into account the fact that $L_1$ and $L_2$ are also determined by the joint error probability $\varepsilon$, there exists a tradeoff between $L_1$ and $L_2$. Specifically, we compare the second-order rate regions under SEP and JEP by presenting a numerical result in Fig.~\ref{fig:L_SEP_JEP}. Under SEP, we assume that the error probabilities of user 1 and 2 are constrained by a sum (joint) error probability, i.e. $\varepsilon_1+\varepsilon_2=0.3$ while correspondingly, we assume the joint error probability satisfies that $\varepsilon=0.3$ under JEP. It is shown that the second-order rate regions under SEP and JEP are matched when $L_1\to\infty$ or $L_2\to\infty$, which corroborates \eqref{JEP:theo_degradedL2} and \eqref{JEP:theo_degradedL1}. However, as shown in Fig.~\ref{fig:L_SEP_JEP}, the rate regions under SEP and JEP do not match when $L_1$ and $L_2$ are finite, i.e., the result under SEP is bounded by that under JEP. This implies that considering JEP criterion brings some advantages if one aims to measure the performance of the mismatched ABC by the sum error probability of two users.

Similar to Theorem~\ref{theo:SEP}, the second-order rate regions under SIC and JNN are same and the similar remark (cf. third paragraph after Theorem~\ref{theo:SEP}) also holds for JEP.

\begin{figure}[tb]
\centering
\includegraphics[width=.9\columnwidth]{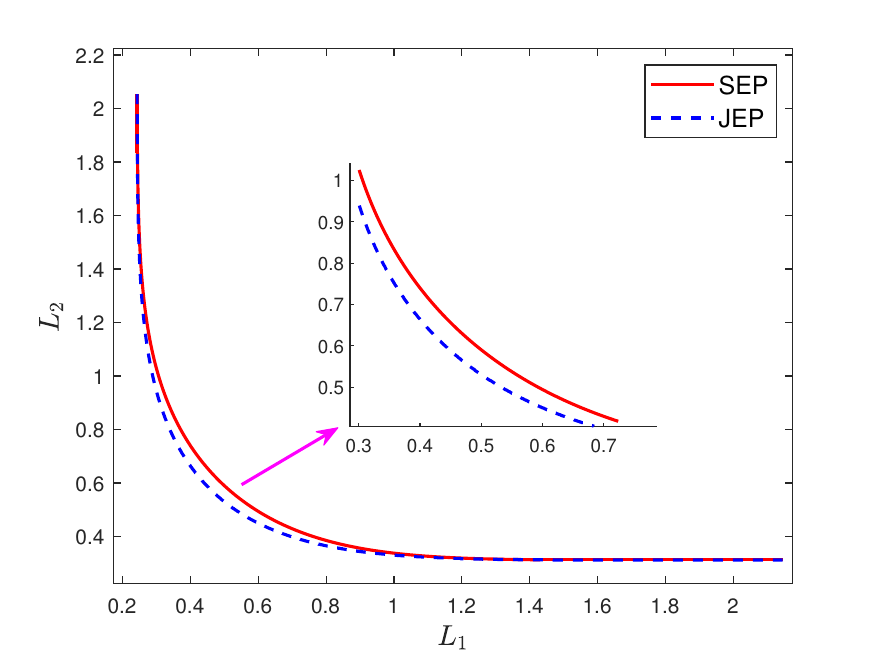}
\caption{Illustration of second-order rate SEP and JEP when $P=5$, $\alpha=0.3$, $\bar\alpha=0.7$, $\beta=0.6$, $\zeta_1=0.3888$ and $\zeta_2=3$. }
\label{fig:L_SEP_JEP}
\end{figure}

\section{Mismatched ABC with quasi-static fading}\label{sec:fading}
In this section, we generalize our results to the case with quasi-static fading. Specifically, for each $i\in[2]$, the fading coefficient $H_i\in\calH$ are generated according to a distribution $P_{H_i}$ and remains unchanged in $n$ channel uses. Thus, the channel outputs for two users satisfy
\begin{align}
Y_1^n&=H_1X^n+Z_1^n,\\
Y_2^n&=H_2X^n+Z_2^n,
\end{align}
where the additive noises $Z_1^n$ and $Z_2^n$ satisfy the same moment constraints in \eqref{def:noise}. We assume that only the receiver $i$ has access to its own CSI, i.e., the fading coefficient $H_i$. Similar to the case without channel fading, the weak user (user 2) treats interference as noise while the strong user (user 1) employs two decoding schemes: SIC and JNN. We next define codes.

The codewords allocation and encoder of an $(n,M_1,M_2)$-SIC code for a two user ABC with quasi-static fading remains the same as that of an $(n,M_1,M_2)$-SIC code for a two user ABC (cf. Definition~\ref{def:code_SIC}). With the CSI of user 2, the decoder $\bar\phi_2$ treats the signal of user 1 as noise using the following NN decoder
\begin{align}
\hatW_2:=\argmin_{w_2\in[M_2]}\|Y_2^n-H_2U^n(w_2)\|^2. \label{def:fading_phi2_NN}
\end{align}
With the CSI of user 1, the decoder $\bar\phi_1$ for user 1 operates with two steps. Firstly, it decodes the message for user 2 while treating the signal of user 1 as noise using the following NN decoder
\begin{align}
\barW_2:=\argmin_{w_2\in[M_2]}\|Y_1^n-H_1U^n(w_2)\|^2\label{def:fading_phi1_TIN}.
\end{align}
Subsequently, user 1 decodes the message for itself by subtracting $H_1U^n(\barW_2)$ with another NN decoder
\begin{align}
\hatW_1:=\argmin_{w_1\in[M_1]}\|Y_1^n-H_1U^n(\barW_2)-H_1V^n(w_1)\|^2. \label{def:fading_phi1_NN}
\end{align}

An $(n,M_1,M_2)$-JNN code for a two user ABC with quasi-static fading is similar to an $(n,M_1,M_2)$-SIC code defined above, with only exception of replacing the SIC decoder in \eqref{def:fading_phi1_TIN} and \eqref{def:fading_phi1_NN} with the following JNN decoder
\begin{align}
\nn&(\hat{W}_1,\bar{W}_2)\\*
&\quad:=\argmin_{(w_1,w_2)\in[M_1]\times[M_2]}\|Y_1^n-H_1U^n(w_2)-H_1V^n(w_1)\|^2.
\end{align}

We use the SEP criterion (cf. \eqref{def:Pe1}-\eqref{def:Pe2}) to evaluate the performance of a code. The results for JEP can be obtained analogously. Furthermore, the following moment assumptions on the fading distribution are required:
\begin{align}
\mathbb{E}_{H_1}[H_1^9]<\infty,\quad \mathbb{E}_{H_2}[H_2^{12}]<\infty. \label{def:fading_SEP_assump_H}
\end{align}
We remark that such assumptions are necessary for the use of Berry-Esseen theorem and are satisfied by commonly assumed models for channel fading, e.g., Rayleigh fading and Rice fading. Similar assumptions can be found in existing studies~\cite[Theorem 3]{yang2014quasi}~\cite[Theorem 5]{zhou2019JSAC}.

Given fading parameters $H_1$ and $H_2$, and the power of codewords $\alpha P$ and $\bar\alpha P$, let $\rmC_1(H_1,\alpha P):=\rmC\left(\frac{H_1^2\alpha P}{\beta}\right)$ and $\rmC_2(H_2,\alpha P,\bar\alpha P):=\rmC\left(\frac{H_2^2\bar\alpha P}{H_2^2\alpha P+1}\right)$. Furthermore, let $\nu_n:=\frac{\log n}{\sqrt{n}}$. Our result for the $(n,M_1,M_2)$-SIC code can be stated as follows.

\begin{theorem}\label{theo:fading_SEP}
Under the assumptions in \eqref{def:fading_SEP_assump_H}, an $(n,M_1,M_2)$-SIC code satisfies
\begin{align}
\rmP_{\rme,1}^n&\leq\mathbb{E}_{H_1}\left[\rmQ\left(\frac{n\rmC_1(H_1,\alpha P)-\log M_1}{\sqrt{n \mathrm{V}_1(H_2^2\alpha P)}}\right)\right]+O(\nu_n),\label{theo:fading_Pe1}\\
\rmP_{\rme,2}^n&\leq\mathbb{E}_{H_2}\left[\rmQ\left(\frac{n\rmC_2(H_2,\alpha P,\bar\alpha P)-\log M_2}{\sqrt{n \mathrm{V}_2(H_2^2\bar\alpha P,H_2^2\alpha P)}}\right)\right]+O(\nu_n). \label{theo:fading_Pe2}
\end{align}
\end{theorem}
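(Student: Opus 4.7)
The plan is to prove Theorem~\ref{theo:fading_SEP} by conditioning on the fading coefficients $(H_1,H_2)$, reducing to the non-fading SEP analysis of Theorem~\ref{theo:SEP} with rescaled codeword powers, and then averaging the resulting pointwise bound against the fading law. Given $H_i=h_i$, user~$i$ observes $Y_i^n=h_i(V^n(W_1)+U^n(W_2))+Z_i^n$, which, once the known scalar $h_i$ is absorbed into the NN (or SIC) metric, is exactly a non-fading mismatched ABC with effective codeword powers $h_i^2\alpha P$ and $h_i^2\bar\alpha P$. Because each user has full CSI, the decoders in \eqref{def:fading_phi2_NN}--\eqref{def:fading_phi1_NN} coincide, after this rescaling, with those analyzed in Theorem~\ref{theo:SEP}.

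Next I would invoke the non-asymptotic Berry--Esseen bound that underlies the proof of Theorem~\ref{theo:SEP} in Section~\ref{sec:proof_sketch_SEP}. For every fixed $h_2$ bounded away from zero, this yields
\begin{align*}
&\Pr\{\hatW_2\ne W_2\mid H_2=h_2\}\\
&\ \le \rmQ\!\left(\frac{n\rmC_2(h_2,\alpha P,\bar\alpha P)-\log M_2}{\sqrt{n\rmV_2(h_2^2\bar\alpha P,h_2^2\alpha P)}}\right)+r_{n,2}(h_2),
\end{align*}
with a structurally identical bound for user~1 in terms of $\rmC_1$ and $\rmV_1$; note that under SIC, the step that decodes $W_2$ at user~1 while treating $V^n$ as noise contributes only an $O(1/\sqrt{n})$ term by the same argument used in the SIC portion of Theorem~\ref{theo:SEP}, because user~1's effective SNR for that step strictly exceeds user~2's. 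Taking the outer expectation over $H_i$ then produces the expected Q-function expressions in \eqref{theo:fading_Pe1}--\eqref{theo:fading_Pe2}.

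The hard part will be upgrading the pointwise Berry--Esseen residual $r_{n,i}(h)$ into the uniform bound $\mathbb{E}[r_{n,i}(H_i)]=O(\nu_n)$. The constant in $r_{n,i}(h)$ depends on third and sixth absolute moments of information-density-type random variables at fading value $h$, and these blow up as $h\to 0$ because both the effective SNR and the dispersion degenerate. To handle this, I would split the expectation over a typical event $\calT_n:=\{|H_i|\ge n^{-\delta}\}$ and its complement. On $\calT_n$ the Berry--Esseen residual is a fixed polynomial in $1/h$ times $1/\sqrt{n}$, producing an $O(n^{c\delta}/\sqrt{n})$ contribution; on $\calT_n^{\rm c}$, Markov's inequality combined with the moment hypotheses $\mathbb{E}[H_1^9]<\infty$ and $\mathbb{E}[H_2^{12}]<\infty$ in \eqref{def:fading_SEP_assump_H} forces the probability of atypical fading to decay polynomially, so the trivial bound of one on the error probability contributes a matching polynomial term. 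Choosing $\delta$ to balance the two contributions yields precisely $O(\nu_n)=O(\log n/\sqrt{n})$; the specific moment orders $9$ and $12$ are what is needed so that, via Cauchy--Schwarz, the dispersion-like terms in $\rmV_i(h^2\alpha P,\cdot)$ and the Berry--Esseen prefactor remain integrable under this truncation, mirroring the role played by the analogous moment conditions in~\cite[Theorem 5]{zhou2019JSAC}.
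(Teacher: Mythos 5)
Your high-level route is the same as the paper's: condition on the fading coefficients, note that with full CSI the decoders reduce to the non-fading mismatched analysis with effective powers $h_i^2\alpha P$ and $h_i^2\bar\alpha P$, apply the functional Berry--Esseen machinery of Section~\ref{sec:proof_sketch_SEP} conditionally on $H_i=h_i$, and average the resulting Q-function bound over the fading law. The genuine gap is in the step you yourself identify as the hard part. You propose to control the averaged Berry--Esseen residual by truncating to $\{|H_i|\ge n^{-\delta}\}$ and claiming that Markov's inequality together with $\mathbb{E}[H_1^9]<\infty$ and $\mathbb{E}[H_2^{12}]<\infty$ makes $\Pr\{|H_i|<n^{-\delta}\}$ decay polynomially. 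This is false: finite upper moments of $H_i$ bound the probability of \emph{large} $|H_i|$ and say nothing about the lower tail. A fading law with an atom at zero, or with density diverging at zero, satisfies \eqref{def:fading_SEP_assump_H} while $\Pr\{|H_i|<n^{-\delta}\}$ stays bounded away from zero, so your atypical-set contribution is $\Theta(1)$, not $O(\nu_n)$, and the truncation argument collapses. You also misattribute the role of the moment orders $9$ and $12$: they tame the \emph{large}-$|H|$ tail, since the per-letter information-density increments scale like $H_1^3$ for user 1 and $H_2^4$ for user 2, so the ninth and twelfth moments are exactly what make the third absolute moments of the vectors $\mathbf{A}_i^F$ in the paper's construction finite.

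The deep-fade region must instead be handled without any lower-tail probability estimate, and this is what the paper (implicitly) does. First, the conditional Berry--Esseen ratio does not blow up as $h\to 0$: the conditional standard deviation and the conditional third absolute moment of the per-letter increments degenerate at matched rates (both driven by the order-$h$ cross terms), so the ratio stays bounded even though $\rmV_2(h^2\bar\alpha P,h^2\alpha P)\to 0$. Second, on any set of very small $h$ one has $n\rmC_i(h,\cdot)\le\log M_i$ by a wide margin, so the Q-function inside the expectation is itself $1-o(1)$ and the trivial bound of $1$ on the conditional error probability suffices there; this dichotomy is the mechanism behind \cite[Lemma 17]{yang2014quasi}. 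Finally, the $\log n$ in $O(\nu_n)$ arises from removing the $\log(K\sqrt{n})$ slack in the Q-function argument via a Taylor expansion whose correction term again involves $1/\sqrt{\rmV_i(h^2\cdot)}$ and must be treated by the same dichotomy; your proposal does not account for this term at all. With these repairs your conditioning-and-averaging strategy does match the paper's proof, but as written the key estimate would fail.
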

The proof sketch of Theorem~\ref{theo:fading_SEP} is provided in Section~\ref{sec:proof_fading}.

Theorem \ref{theo:fading_SEP} implies that the error probabilities of both users are dominated by the expectation terms involving the ccdf $\rmQ$ function, with expectation over the fading coefficients. Using~\cite[Lemma 17]{yang2014quasi}, the dominated terms equal to the outage probabilities at finite block-length, i.e.,
\begin{align}
\nn&\mathbb{E}_{H_1}\left[\rmQ\left(\frac{n\rmC_1(H_1,\alpha P)-\log M_1}{\sqrt{n \mathrm{V}_1(H_2^2\alpha P)}}\right)\right]\\
&\qquad=\rmP_{\mathrm{out},1}^n(M_1)+O(\nu_n), \label{SEP:fading_usingLemma17_1}\\
\nn&\mathbb{E}_{H_2}\left[\rmQ\left(\frac{n\rmC_2(H_2,\alpha P,\bar\alpha P)-\log M_2}{\sqrt{n \mathrm{V}_2(H_2^2\bar\alpha P,H_2^2\alpha P)}}\right)\right]\\
&\qquad=\rmP_{\mathrm{out},2}^n(M_2)+O(\nu_n), \label{SEP:fading_usingLemma17_2}
\end{align}
where the outage probabilities are defined as
\begin{align}
\rmP_{\mathrm{out},1}^n(M_1)&:=\Pr\{n\rmC_1(H_1,\alpha P)\leq\log M_1\},\\
\rmP_{\mathrm{out},2}^n(M_2)&:=\Pr\{n\rmC_2(H_2,\alpha P,\bar\alpha P)\leq\log M_2\}.
\end{align}

We next present achievable rate regions using outage probabilities. Given any $(\varepsilon_1,\varepsilon_2)\in(0,1)^2$, define the outage capacity region for mismatched ABC under SEP as
\begin{align}
\nn\calR_{\mathrm{out}}^{\mathrm{SEP}}(\varepsilon_1,\varepsilon_2):=&\{(R_1,R_2)\in\bbR_+: \rmP_{\mathrm{out},1}^n(\exp\{nR_1\})\leq \varepsilon_1,\;\\*
&\qquad\rmP_{\mathrm{out},2}^n(\exp\{nR_2\})\leq \varepsilon_2\}.
\end{align}
\begin{corollary}\label{theo:coro_fading_SEP}
Under the conditions of Theorem~\ref{theo:fading_SEP}, given any $(\varepsilon_1,\varepsilon_2)\in(0,1)^2$ and any rate pair $(R_1,R_2)\in\calR_{\mathrm{out}}^{\mathrm{SEP}}(\varepsilon_1,\varepsilon_2)$, there exists an $(n,M_1,M_2)$-code such that for each $i\in[2]$,
\begin{align}
\rmP_{\mathrm{e},i}^n&\leq \varepsilon_i,\\
\log M_i&=nR_i+O(\log n).
\end{align}
\end{corollary}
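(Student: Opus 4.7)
\textbf{Proof plan for Corollary~\ref{theo:coro_fading_SEP}.}

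The strategy is to read Theorem~\ref{theo:fading_SEP} through the lens of outage probabilities, using the consequence of Lemma~17 of~\cite{yang2014quasi} already recorded in~\eqref{SEP:fading_usingLemma17_1}--\eqref{SEP:fading_usingLemma17_2}, and then absorb the residual $O(\nu_n)$ slack by a small adjustment of the codebook sizes.

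First, I would substitute~\eqref{SEP:fading_usingLemma17_1} into~\eqref{theo:fading_Pe1} and~\eqref{SEP:fading_usingLemma17_2} into~\eqref{theo:fading_Pe2} so that, for any $(n,M_1,M_2)$-SIC code drawn from the ensemble described in Section~\ref{sec:fading},
\begin{align}
\rmP_{\rme,i}^n \;\leq\; \rmP_{\mathrm{out},i}^n(M_i) + O(\nu_n), \qquad i\in[2].
\end{align}
The $O(\nu_n)$ term here collects both the Berry--Esseen-type residual from Theorem~\ref{theo:fading_SEP} and the Gaussian-vs-outage residual from \eqref{SEP:fading_usingLemma17_1}--\eqref{SEP:fading_usingLemma17_2}.

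Second, I would invoke the hypothesis $(R_1,R_2)\in\calR_{\mathrm{out}}^{\mathrm{SEP}}(\varepsilon_1,\varepsilon_2)$, which by definition yields $\rmP_{\mathrm{out},i}^n(\exp\{nR_i\})\leq\varepsilon_i$. To upgrade this to the strict bound $\rmP_{\rme,i}^n\leq\varepsilon_i$, I would pick $\log M_i = nR_i - a_n$ for a suitable sequence $a_n=O(\log n)$ so that the outage probability at $M_i$ decreases by at least $\Omega(\nu_n)$ compared with the one at $\exp\{nR_i\}$. The moment assumptions~\eqref{def:fading_SEP_assump_H} on $(H_1,H_2)$ ensure that the capacities $\rmC_i(H_i,\cdot)$ admit bounded, continuous densities, which provides a quantitative continuity estimate for the outage curve $\rmP_{\mathrm{out},i}^n(\cdot)$ near the target rate. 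Combining the two displays gives $\rmP_{\rme,i}^n\leq\varepsilon_i$ and $\log M_i = nR_i + O(\log n)$ simultaneously, which is the claim.

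The main obstacle is the last step: showing that an $O(\log n)$-size shift of $\log M_i$ delivers the $\Omega(\nu_n)$ reduction in $\rmP_{\mathrm{out},i}^n$ needed to dominate the residual term. This reduces to a careful continuity/Lipschitz argument on the CDF of $\rmC_i(H_i,\cdot)$, which I would handle by exploiting the regularity of $P_{H_1}$ and $P_{H_2}$ guaranteed by~\eqref{def:fading_SEP_assump_H}; all remaining ingredients are either immediate from Theorem~\ref{theo:fading_SEP} or from the definition of $\calR_{\mathrm{out}}^{\mathrm{SEP}}$.
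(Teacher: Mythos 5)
Your first two steps are exactly the paper's route: the paper proves Corollary~\ref{theo:coro_fading_SEP} by combining Theorem~\ref{theo:fading_SEP} with the conversions \eqref{SEP:fading_usingLemma17_1}--\eqref{SEP:fading_usingLemma17_2} and the definition of $\calR_{\mathrm{out}}^{\mathrm{SEP}}(\varepsilon_1,\varepsilon_2)$ (mirroring Corollary~6 of the cited quasi-static fading work), which yields $\rmP_{\rme,i}^n\leq \rmP_{\mathrm{out},i}^n(M_i)+O(\log n/\sqrt{n})$ and then reads off the claim; it does not attempt to cancel the residual by shrinking the codebook.

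The genuine gap is your third step, the mechanism you yourself flag as the main obstacle. First, the moment conditions \eqref{def:fading_SEP_assump_H} are only finiteness of $\mathbb{E}[H_1^9]$ and $\mathbb{E}[H_2^{12}]$; they do not imply that $\rmC_i(H_i,\cdot)$ has a bounded or continuous density (a discrete fading law satisfies them), so the ``quantitative continuity estimate'' you invoke has no basis in the stated hypotheses. Second, and more fundamentally, the absorption is quantitatively impossible in exactly the regime where the Lemma-17-type conversion is valid: since $\rmP_{\mathrm{out},i}^n(\exp\{nR_i-a_n\})-\rmP_{\mathrm{out},i}^n(\exp\{nR_i\}) = -\Pr\{R_i-a_n/n< \rmC_i\leq R_i\}$, a bounded density near $R_i$ gives a reduction of at most $O(a_n/n)=O(\log n/n)$ when $a_n=O(\log n)$, which is a factor $\sqrt{n}$ too small to dominate the $O(\log n/\sqrt{n})$ residual inherited from Theorem~\ref{theo:fading_SEP} and \eqref{SEP:fading_usingLemma17_1}--\eqref{SEP:fading_usingLemma17_2}. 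To eat that residual through the outage curve you would need a back-off of order $\sqrt{n}\log n$ in $\log M_i$, contradicting the claimed $\log M_i=nR_i+O(\log n)$. So the Lipschitz-type argument you propose delivers an upper bound on the change of the outage probability, not the lower bound $\Omega(\log n/\sqrt{n})$ you need, and the final upgrade from $\rmP_{\rme,i}^n\leq\varepsilon_i+O(\log n/\sqrt{n})$ to the clean bound cannot be obtained this way; the $O(\log n)$ slack in $\log M_i$ should instead be understood as coming from the $\log(K\sqrt{n})$ terms already present in the bounds of Theorem~\ref{theo:fading_SEP}, with the vanishing residual retained rather than cancelled.
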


The proof of Corollary~\ref{theo:coro_fading_SEP} is similar to~\cite[Corollary 6]{zhou2019JSAC}, which follows from the definition of $\calR_{\mathrm{out}}^{\mathrm{SEP}}(\varepsilon_1,\varepsilon_2)$, \eqref{SEP:fading_usingLemma17_1}, \eqref{SEP:fading_usingLemma17_2} and Theorem~\ref{theo:fading_SEP}.

Corollary~\ref{theo:coro_fading_SEP} implies that the second-order rates with the order $1/\sqrt{n}$ equal to 0, which are consistent with existing results in~\cite{yang2014quasi,zhou2019JSAC}. Furthermore, Corollary~\ref{theo:coro_fading_SEP} indicates that the outage capacity region, a widely used asymptotic notion in wireless communication systems under quasi-static fading, is an accurate performance criterion at finite blocklength for our mismatched coding schemes.

Furthermore, we remark that the corresponding result under JEP is identical to Corollary~\ref{theo:coro_fading_SEP} with the only exception that the separate error probabilities $(\rmP_{\rme,i}^n,\varepsilon_i)$ are replaced by the joint error probability $(\rmP_{\rme,\rmJ},\varepsilon)$.

\section{Proof of Theorem~\ref{theo:SEP}}\label{sec:proof_sketch_SEP}
\subsection{Preliminaries}
Recall that $\rmC(P):=\frac{1}{2}\log(1+P)$ and $\calN(x;\mu,\sigma^2)$ is a Gaussian distribution with mean $\mu$ and variance $\sigma^2$ with $x$ being a particular realization. Our analysis makes extensive use of ``mismatched'' information density~\cite[Section IV]{scarlett2017mismatch}. Specifically, given any $(u,y_2)\in\bbR^2$, let
\begin{align}
\tilde\imath(u;y_2):&=\log\frac{\calN(y_2;u,1+\alpha P)}{\calN(y;0,1+P)}\\
&=\rmC\left(\frac{\bar\alpha P}{\alpha P+1}\right)+\frac{y_2^2}{2(P+1)}-\frac{(y_2-u)^2}{2(\alpha P+1)}, \label{def:density_uy2}
\end{align}
which is the information density of the Gaussian channel $\calN(y;u,\alpha P+1)$ with input $u$, noise power $P+1$ and output $y$. For any $n\in\bbN$, we define the following $n$-letter version of the mismatched information density
\begin{align}
\tilde\imath^n(u^n;y_2^n):=\sum_{i=1}^n\tilde\imath(u_i;y_{2,i}).
\end{align}

Similarly, for any $(y_1,v,u)\in\bbR^3$ and any $n\in\bbN$, we define the conditional information densities
\begin{align}
&\tilde\imath(v;y_1|u):=\rmC\left(\frac{\alpha P}{\beta}\right)+\frac{(y_1-u)^2}{2(\alpha P+\beta)}-\frac{(y_1-u-v)^2}{2\beta},\\
&\tilde\imath^n(v^n;y_1^n|u^n):=\sum_{i=1}^n\tilde\imath(v_i;y_{1,i}|u_i),
\end{align}
and the joint information densities
\begin{align}
&\tilde\imath(v,u;y_1):=\rmC\left(\frac{P}{\beta}\right)+\frac{y_1^2}{2(P+\beta)}-\frac{(y_1-u-v)^2}{2\beta},\\
&\tilde\imath^n(v^n,u^n;y_1^n):=\sum_{i=1}^n\tilde\imath(v_i,u_i;y_{1,i}). \label{def:density_uvy}
\end{align}

For smooth presentation, given any $(u^n,v^n,y_1^n,y_2^n)$ and any $t\in\bbR_+$, let
\begin{align}
g_1(t;y_1^n,u^n)&:=\Pr\{\tilde\imath^n(\barV^n;y_1^n|u^n)\geq t\},\label{def:g1}\\
g_2(t;y_1^n,v^n)&:=\Pr\{\tilde\imath^n(\barU^n;y_1^n|v^n)\geq t\},\label{def:g2}\\
g_{1,2}(t,y_2^n)&:=\Pr\{\tilde\imath^n(\barU^n,\barV^n;y_1^n)\geq t\},\label{def:g12}\\
g_3(t,y_2^n)&:=\Pr\{\tilde\imath^n(\barU^n;y_2^n)\geq t\},\label{def:g3}
\end{align}
\begin{align}
\Delta_1&:=2\beta(\beta+\alpha P)\Big(\rmC\Big(\frac{\alpha P}{\beta}\Big)-\log M_1-\log(K_1\sqrt{n})\Big),\label{def:Delta_1}\\
\nn\Delta_3&:=2(1+P)(1+\alpha P)\\*
&\qquad\times\Big(\rmC\Big(\frac{\bar\alpha P}{\alpha P+1}\Big)-\log M_2-\log(K_3\sqrt{n})\Big), \label{def:Delta_3}
\end{align}
where $\barV^n$ and $\barU^n$ are distributed according to $P_{V^n}$ in \eqref{def:PV^n} and $P_{U^n}$ in \eqref{def:PU^n}, respectively.

\subsection{Analysis of Error Probability for User 2}\label{sec:phi_2}
We firstly analyze the error probability of user 2. By \eqref{def:density_uy2}, the nearest neighbor rule in \eqref{def:phi2_NN} is equivalent to maximizing $\tilde\imath^n(u^n(w_2);y_2^n)$ over $w_2\in[M_2]$, i.e.,
\begin{align}
\phi_2(Y_2^n)&=\argmin_{w_2\in[M_2]}\|Y_2^n-U^n(w_2)\|^2\\
\nn&=\argmax_{w_2\in[M_2]}n\rmC\left(\frac{\bar\alpha P}{\alpha P+1}\right)+\frac{\|Y_2^n\|^2}{2(P+1)}\\*
&\quad-\frac{\|Y_2^n-U^n(w_2)\|^2}{2(\alpha P+1)}. \label{SEP:phi_2_maxdensity}
\end{align}

Using the random coding union (RCU) bound~\cite[Theorem 16]{polyanskiy2010finite} when the information density is replaced by mismatched information density, the ensemble error probability $\rmP_{\rme,2}^n$ is upper bounded as follows:
\begin{align}
\nn\rmP_{\rme,2}^n&\leq\mathbb{E}\Big[\min\big\{1,\\*
&\quad M_2\Pr\{\tilde\imath^n(\barU^n;Y_2^n) \geq \tilde\imath^n(U^n;Y_2^n)|U^n,Y_2^n\}\big\}\Big], \label{SEP:phi2_useRCU}
\end{align}
where $(\barU^n,U^n,Y_2^n)\sim P_{U^n}(\bar u^n)P_{U^n}(u^n)P_{Y_2^n|U^n}(y_2^n|u^n)$. We can further upper bound $\rmP_{\rme,2}^n$ as follows:
\begin{align}
\rmP_{\rme,2}^n&\leq\mathbb{E}\Big[\min\big\{1,M_2 g_3(\tilde\imath^n(U^n;Y_2^n),Y_2^n)\big\}\Big] \label{SEP:phi2_usedef_g3}\\
&\leq\mathbb{E}\Big[\min\big\{1,M_2 K_3e^{-\tilde\imath^n(U^n;Y_2^n),Y_2^n)}\big\}\Big] \label{SEP:phi2_uselemma_g3}\\
&\leq\Pr\Big\{M_2 K_3e^{-\tilde\imath^n(U^n;Y_2^n),Y_2^n)}\geq\frac{1}{\sqrt{n}}\Big\} +\frac{1}{\sqrt{n}} \label{SEP:phi2_use_min1J}\\
\nn&\leq\frac{1}{\sqrt{n}}+\Pr\Big\{\log M_2-\Big(n\rmC\Big(\frac{\bar\alpha P}{\alpha P+1}\Big)+\frac{\|Y_2^n\|^2}{2(P+1)}\\*
&\qquad\quad-\frac{\|Y_2^n-U^n\|^2}{2(\alpha P+1)}\Big)\geq-\log (K_3\sqrt{n})\Big\}\\
\nn&=\frac{1}{\sqrt{n}}+\Pr\Big\{\frac{\|V^n+Z_2^n\|^2}{2(\alpha P+1)}-\frac{\|U^n+V^n+Z_2^n\|^2}{2(P+1)}\\*
&\qquad\quad\geq \frac{\Delta_3}{2(\alpha P+1)(P+1)}\Big\}, \label{SEP:phi2_use_barZ&Delta}
\end{align}
where \eqref{SEP:phi2_usedef_g3} follows from the definition of $g_3$ in \eqref{def:g3}, \eqref{SEP:phi2_uselemma_g3} follows from~\cite[Lemma 6]{yavas2021gaussian}, and \eqref{SEP:phi2_use_barZ&Delta} follows from $\barZ_2^n:=Z_2^n+V^n$ and \eqref{def:Delta_3}.

Note that user 2 in our ABC setting is the receiver 1 in the non-Gaussian interference channel~\cite[Section III]{scarlett2017mismatch} since we treat the signal transmitted for user 1 as interference. The remaining steps are similar to~\cite[Eq. (108-132)]{scarlett2017mismatch} except that
\begin{enumerate}[i)]
\item $\log M$ is replaced by $\log M_2$,
\item the interference signal is replaced from $\tilde{X}^n$ to $V^n$ and the corresponding per-symbol average power $\tilde{P}$ is replaced by $\alpha P$,
\item the transmitted symbol $X_1^n$ is replaced by $U^n$ and its per-symbol average power is replaced from $P_1$ to $\bar\alpha P$,
\item the noise is replaced from $Z^n$ to $Z_2^n$ and the SINR of the channel is replaced from $\barP$ to $\frac{\bar\alpha P}{\alpha P+1}$.
\end{enumerate}

Choosing
\begin{align}
\log M_2=n\rmC\Big(\frac{\bar\alpha P}{\alpha P+1}\Big)-\sqrt{n}L_2, \label{SEP:chooseM2}
\end{align}
we conclude that
\begin{align}
\rmP_{\rme,2}^n \leq\rmQ\left(\frac{\sqrt{n}L_2-\log(K_2\sqrt{n})}{\sqrt{n \mathrm{V}_2(\bar\alpha P,\alpha P)}}\right)+O\Big(\frac{1}{\sqrt{n}}\Big). \label{SEP:phi2_chooseM2}
\end{align}
Finally, equating the right-hand-side of \eqref{SEP:phi2_chooseM2} to $\varepsilon_2$, solving for $L_2$ and performing a first-order Taylor expansion of $\rmQ^{-1}(\cdot)$ around $\varepsilon_2$, we establish the bound on $L_2$ in Theorem~\ref{theo:SEP}.

\subsection{Analysis of Error Probability for User 1 Under SIC}\label{sec:proof_SEP_1_SIC}
We now upper bound the error probability of user 1 under SIC coding. By the decoding process in Definition~\ref{def:code_SIC}, the error probability of user 1 satisfies
\begin{align}
\rmP_{\rme,1}^n&=\Pr\{\barW_2\neq W_2\;\mathrm{or}\;\hatW_1\neq W_1\} \label{SEP:phi_1_usedef}\\
&=\Pr\{\barW_2\neq W_2\} + \Pr\{\hatW_1\neq W_1,\barW_2=W_2\}. \label{SEP:phi_1_decompose}
\end{align}

Following the similar analysis process of $\rmP_{\rme,2}^n$ and replacing the noise $Z_2^n$ by $Z_1^n$, it follows that
\begin{align}
\nn&\Pr\{\barW_2\neq W_2\}\leq O\Big(\frac{1}{\sqrt{n}}\Big)\\*
&\qquad+\rmQ\left(\frac{n\rmC\Big(\frac{\bar\alpha P}{\alpha P+\beta}\Big)-\log M_2-\log(K_2\sqrt{n})}{\sqrt{n\rmV_2(\bar\alpha P,\alpha P)}}\right)\\
&\qquad\qquad\qquad\;\;\leq O\Big(\frac{1}{\sqrt{n}}\Big), \label{SEP:phi_1_useM2}
\end{align}
where \eqref{SEP:phi_1_useM2} follows from the reasons stated below. Note that we choose $\log M_2=n\rmC(\frac{\bar\alpha P}{\alpha P+1})-\sqrt{n}L_2$ according to Section~\ref{sec:phi_2} and the fact that $\rmC(\frac{\bar\alpha P}{\alpha P+\beta})>\rmC(\frac{\bar\alpha P}{\alpha P+1})$ for $\beta\in(0,1)$. Invoking Chebyshev's inequality and the contents inside $\rmQ(\cdot)$ are with order $O(\sqrt{n})$, the term $\rmQ(\cdot)\leq O(\frac{1}{n})$. An intuitive explanation is that  the channel of user 1 is strictly better than user 2 and thus user 2 can decode the message for user 1 for free. 

Invoking \eqref{SEP:phi_1_decompose} and \eqref{SEP:phi_1_useM2}, the error probability $\rmP_{\rme,1}^n$ can be upper bounded as
\begin{align}
\rmP_{\rme,1}^n&\leq\Pr\{\hatW_1\neq W_1,\barW_2=W_2\}+O\Big(\frac{1}{\sqrt{n}}\Big )\\
&=\Pr\{\hatW_1\neq W_1|\barW_2=W_2\}\Pr\{\barW_2=W_2\}+O\Big(\frac{1}{\sqrt{n}}\Big)\\
&\leq\Pr\{\hatW_1\neq W_1|\barW_2=W_2\}+O\Big(\frac{1}{\sqrt{n}}\Big)
\end{align}
Recall the definition of the nearest neighbor rule in \eqref{def:phi1_NN}, it follows that
\begin{align}
\hatW_1&=\argmin_{w_1\in[M_1]}\|Y_1^n-U^n(\barw_2)-V^n(w_1)\|^2\\
\nn&=\argmax_{w_1\in[M_1]}n\rmC\left(\frac{\alpha P}{\beta}\right)+\frac{\|Y_1^n-U^n(\bar{w}_2)\|^2}{2(\alpha P+\beta)}\\*
&\quad-\frac{\|Y_2^n-U^n(\bar{w}_2)-V^n(w_1))\|^2}{2\beta}\\
&=\argmax_{w_1\in[M_1]}\tilde\imath^n(V^n(w_1);Y_1^n|U^n(\barw_2)). \label{SEP:phi_1_maxdensity}
\end{align}
Proceeding by the similar steps from \eqref{SEP:phi2_useRCU} to \eqref{SEP:phi2_use_barZ&Delta}, we further upper bounded $\rmP_{\rme,1}^n$ as  follows:
\begin{align}
\nn\rmP_{\rme,1}^n
&\leq\frac{1}{\sqrt{n}}+\Pr\Big\{(\alpha P+\beta)\|Z_1^n\|^2-\beta\big(\|Z_1^n\|^2+\|V^n\|^2\\*
&\qquad\quad+2\langle V^n,Z_1^n\rangle\big)\geq \Delta_1\Big\}\\
\nn&=\Pr\Big\{\alpha P\|Z_1^n\|^2-n\alpha\beta P-2\beta\langle V^n,Z_1^n\rangle\geq\Delta_1\Big\}\\*
&\qquad\quad+\frac{1}{\sqrt{n}}\ \label{SEP:phi1_use_powerU},
\end{align}
where \eqref{SEP:phi1_use_powerU} follows since $\|V^n\|^2=n\alpha P$. 

Subsequently, we use the Berry-Esseen Theorem for functions of random vectors (cf. \cite[Proposition 1]{iri2015third},\cite[Theorem 3]{scarlett2017mismatch}). To do so, we write the spherical code in i.i.d. forms. Let $V^n=\sqrt{n\alpha P}\frac{\tilde{V}^n}{\|\tilde{V}^n\|}$, where $\tilde{V}^n\sim\calN(\mathbf{0},\mathbf{I}_n)$. For $i\in[n]$, let $\mathbf{B}_i:=[B_{1,i},B_{2,i},B_{3,i}]$ and
\begin{align}
B_{1,i}:=\beta-Z_{1,i}^2,B_{2,i}:=\sqrt{\alpha P}\tilde{V}_iZ_{1,i},B_{3,i}:=\tilde{V}_i^2-1.
\end{align}
The covariance matrix of $\mathbf{B}_1$ is $\mathrm{Cov}(\mathbf{B}_1)=\diag[\zeta_1-\beta^2\quad \alpha\beta P\quad 2]$.
Define the function
\begin{align}
f_1(b_1,b_2,b_3):=\alpha Pb_1+\frac{2\beta b_2}{\sqrt{1+b_3}}. \label{def:f_1_b123}
\end{align}
One can verify that $-nf_2(\frac{1}{n}\sum_{i=1}^{n}\mathbf{B}_i)=\alpha P\|Z_1^n\|^2-n\beta\alpha P-2\beta\langle V^n,Z_1^n\rangle$ and we can get the Jacobian matrix around $\mathbb{E}\mathbf{B}_1=\mathbf{0}$, i.e., $\mathbf{J_1}=[\alpha P\quad2\beta\quad 0]$.

Note that $\mathbf{B_1},\ldots,\mathbf{B_n}$ are random vectors with mean $\mathbf{0}$ and finite third moment. Then, by Berry-Esseen theorem for functions, we conclude that $(\alpha P\|Z_1^n\|^2-n\beta\alpha P-2\beta\langle V^n,Z_1^n\rangle)/\sqrt{n}$ converges in distribution to a univariate and zero-mean normal random variable with variance $\mathbf{J_1}\mathrm{Cov}(\mathbf{B}_1)\mathbf{J_1}^\mathrm{T}=4\beta^2(\alpha P+\beta)^2\mathrm{V}_1(\alpha P)$, and the converge rate is $O(\frac{1}{\sqrt{n}})$. It follows that
\begin{align}
\rmP_{\rme,1}^n&\leq \rmQ\left(\frac{\Delta_1}{\sqrt{n\mathbf{J}_1\mathrm{Cov}(\mathbf{B}_1)\mathbf{J}_1^\mathrm{T}}} \right)+O\Big(\frac{1}{\sqrt{n}}\Big), \label{SEP:phi1_use_BEtheorem}\\
&=\rmQ\left(\frac{\sqrt{n}L_1-\log(K_1\sqrt{n})}{\sqrt{n \mathrm{V}_1(\alpha P)}}\right)+O\Big(\frac{1}{\sqrt{n}}\Big), \label{SEP:phi1_chooseM1}
\end{align}
where \eqref{SEP:phi1_chooseM1} follows since we choose $\log M_1=n\rmC\Big(\frac{\alpha P}{\beta}\Big)-\sqrt{n}L_1$, and the definition of $\Delta_1$ in \eqref{def:Delta_1}. Finally, equating the right-hand-side of \eqref{SEP:phi1_chooseM1} to $\varepsilon_1$, solving for $L_1$ and performing a first-order Taylor expansion of $\rmQ^{-1}$ about $\varepsilon_1$, we establish the bound on $L_1$ in Theorem~\ref{theo:SEP}.

\subsection{Analysis of Error Probability for User 1 Under JNN}\label{sec:proof_SEP_1_JNN}
To analyze the error probability of user 1 under JNN, we need the following RCU bound.
\begin{lemma}\label{theo:lemma_SEP_RCU_JNN}
There exist an $(n,M_1,M_2)$-JNN code such that the ensemble error probability of user 1 satisfies
\begin{align}
\nn&\rmP_{\rme,1}^n\leq \mathbb{E}\Big[\min\Big\{1,(M_1-1)\Pr\big\{\tilde\imath^n(\barV^n;Y_1^n|U^n)\\*
\nn&\quad\geq\tilde\imath^n(V^n;Y_1^n|U^n)|V^n,U^n,Y_1^n\big\}+(M_2-1)\\*
\nn&\quad\Pr\big\{\tilde\imath^n(\barU^n;Y_1^n|V^n)\geq\tilde\imath^n(U^n;Y_1^n|V^n)|U^n,V^n,Y_1^n\big\}\\ \nn&\quad+(M_1-1)(M_2-1)\Pr\big\{\tilde\imath^n(\barU^n,\barV^n;Y_1^n)\\*
&\quad\geq\tilde\imath^n(U^n,V^n;Y_1^n)|U^n,V^n,Y_1^n\big\}\Big\}\Big],
\end{align}
where the joint distribution of $(U^n,\barU^n,V^n,\barV^n,Y_1^n)$ satisfies
\begin{align}
\nn P_{U^n,\barU^n,V^n,\barV^n,Y_1^n}=&P_{U^n}(u^n)P_{U^n}(\baru^n)P_{V^n}(v^n)P_{V^n}(\barv^n)\\*
&\times P_{Y_1^n|U^nV^n}(y_1^n|u^nv^n).
\end{align}
\end{lemma}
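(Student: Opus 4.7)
The plan is to follow the standard recipe for random coding union (RCU) bounds with joint decoding, adapted to the mismatched ABC. The starting point is to express the JNN rule in information-density form. By expanding the squared norm as in \eqref{SEP:phi_2_maxdensity}, one verifies that $(\hat{W}_1,\bar{W}_2)=\argmax_{(w_1,w_2)}\tilde\imath^n(v^n(w_1),u^n(w_2);y_1^n)$, so that a decoding error (meaning $\hat{W}_1\neq W_1$ or $\bar{W}_2\neq W_2$) occurs precisely when some competing pair $(w_1',w_2')\neq(W_1,W_2)$ achieves a joint information density at least as large as that of the transmitted pair.

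First, I would split the error event into the three disjoint ``error types'' indexed by which of the two indices is decoded wrongly:
(a) $w_1'\neq W_1$, $w_2'=W_2$; (b) $w_1'=W_1$, $w_2'\neq W_2$; (c) $w_1'\neq W_1$, $w_2'\neq W_2$. Conditioned on the transmitted codewords and $Y_1^n$, applying the union bound within each type gives $M_1-1$, $M_2-1$, and $(M_1-1)(M_2-1)$ pairwise terms, respectively. The key algebraic step is the chain-rule identity for the mismatched information densities, namely
\begin{align}
\tilde\imath^n(v^n,u^n;y_1^n)&=\tilde\imath^n(v^n;y_1^n\mid u^n)+\tilde\imath^n(u^n;y_1^n)\\
&=\tilde\imath^n(u^n;y_1^n\mid v^n)+\tilde\imath^n(v^n;y_1^n),
\end{align}
which one verifies by direct computation from the per-letter expressions in \eqref{def:density_uy2}--\eqref{def:density_uvy} (the constant terms combine via $\rmC(\alpha P/\beta)+\rmC(\bar\alpha P/(\alpha P+\beta))=\rmC(P/\beta)$, and the remaining quadratic terms cancel pairwise). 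Using the first identity in case (a) and the second in case (b), the common ``shared codeword'' term cancels on both sides of the comparison, so (a) reduces to $\tilde\imath^n(\bar V^n;Y_1^n|U^n)\geq\tilde\imath^n(V^n;Y_1^n|U^n)$ and (b) to $\tilde\imath^n(\bar U^n;Y_1^n|V^n)\geq\tilde\imath^n(U^n;Y_1^n|V^n)$, while case (c) uses the joint densities directly.

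Next I would combine the three contributions: conditioned on $(U^n,V^n,Y_1^n)$, the conditional error probability is upper bounded by the sum of the three pairwise probabilities (each already multiplied by the corresponding codebook size minus one), and since probabilities are trivially at most $1$, we may cap this sum by $\min\{1,\cdot\}$. Taking the outer expectation over the joint law
\begin{align}
&P_{U^n}(u^n)P_{V^n}(v^n)P_{Y_1^n|U^nV^n}(y_1^n|u^n,v^n)\\
&\quad\times P_{U^n}(\bar u^n)P_{V^n}(\bar v^n),
\end{align}
where $(\bar U^n,\bar V^n)$ are independent copies drawn from the spherical ensembles \eqref{def:PV^n}--\eqref{def:PU^n} (reflecting the random-coding construction), yields precisely the bound stated in Lemma~\ref{theo:lemma_SEP_RCU_JNN}, with $M_i-1$ absorbing the ``$w_i'\neq W_i$'' index-union steps.

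The main obstacle I anticipate is bookkeeping rather than depth: one must carefully argue that the three error types partition the error event, that the chain-rule decomposition lets the correctly-decoded codeword drop out of each pairwise comparison in cases (a) and (b), and that independence of $(\bar V^n,\bar U^n)$ from $(V^n,U^n,Y_1^n)$ (by symmetry of the spherical ensemble across codeword indices) justifies replacing the union over competing messages by the stated pairwise probabilities. Once these are in place, the $\min\{1,\cdot\}$ cap is immediate, and the lemma follows.
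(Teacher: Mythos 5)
Your proposal is correct and follows essentially the same route as the paper: decompose the error event according to which of the two indices is wrongly decoded, use the equivalence of the nearest-neighbor rule with maximizing the (joint/conditional) mismatched information densities, apply the union bound over competitors within each type, cap the conditional error probability by one, and average over the codebook and channel. Your explicit chain-rule cancellation in cases (a) and (b) is just a more detailed justification of the reduction the paper states directly, so no substantive difference.
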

\begin{proof}
In joint decoding, it is naturally to consider the following error events:
\begin{itemize}
\item $\calE_{1,1}:\{\hatW_1\neq W_1, \barW_2=W_2\}$,
\item $\calE_{1,2}:\{\hatW_1=W_1, \barW_2\neq W_2\}$,
\item $\calE_{1,3}:\{\hatW_1\neq W_1, \barW_2\neq W_2\}$.
\end{itemize}

By the union bound, we have that
\begin{align}
\rmP_{\rme,1}^n&=\Pr\{\calE_{1,1}\cup\calE_{1,2}\cup\calE_{1,3}\}\\*
&\leq \Pr\{\calE_{1,1}\}+\Pr\{\calE_{1,2}\}+\Pr\{\calE_{1,3}\}. \label{SEP:phi_1_union}
\end{align}
Now we bound the terms in \eqref{SEP:phi_1_union} separately by exploiting the equivalence between the maximum information density decoder and nearest neighbor decoder. If $\calE_{1,1}$ happens, following from \eqref{SEP:phi_1_maxdensity},
\begin{align}
\phi_1(Y_1^n)=\argmax_{w_1\in[M_1]}\tilde\imath^n(V^n(w_1);Y_1^n|U^n(\barw_2)).
\end{align}
Similarly, under error event $\calE_{1,2}$ and $\calE_{1,3}$, one can show that maximizing the information densities $\tilde\imath^n(U^n(\barw_2);Y_1^n|V^n(w_1))$ and $\tilde\imath^n(V^n(w_1),U^n(\barw_2);Y_1^n)$ are equivalent to minimizing the Euclidean distances, respectively. The rest proof is omitted due to similarity to AWGN channel~\cite[Theorem 16]{polyanskiy2010finite}.
\end{proof}

For subsequent analyses, we need the following notation:
\begin{align}
&\boldsymbol{\tilde{\imath}_\rms}:=\left[
\begin{array}{c}
\tilde\imath^n(V^n;Y_1^n|U^n)\\
\tilde\imath^n(U^n;Y_1^n|V^n)\\
\tilde\imath^n(V^n,U^n;Y_1^n)
\end{array}\right],\;
\mathbf{C_\rms}:=\left[
\begin{array}{c}
\rmC\Big(\frac{\alpha P}{\beta}\Big)\\
\rmC\Big(\frac{\bar\alpha P}{\beta}\Big)\\
\rmC\Big(\frac{P}{\beta}\Big)
\end{array}\right],\\
&\mathbf{M_\rms}:=\left[
\begin{array}{c}
\log (M_1K_1)\\
\log (M_2K_2)\\
\log (M_1M_2K_{1,2})
\end{array}\right],\;
\boldsymbol{\tau_\rms}:=\mathbf{M_\rms}-n\mathbf{C_\rms}, \label{def:M_tau}\\
&\calS_{\rms}:=\big\{\boldsymbol{\tilde{\imath}_\rms}\geq \mathbf{M_\rms}\big\},\qquad\quad
\boldsymbol{\imath_{\rms}}:=[\imath_{\rms,1}\quad\imath_{\rms,2}\quad\imath_{\rms,3}]^T, \label{def:calSs}
\end{align}
\begin{align}
&\imath_{\rms,1}:=\frac{\alpha P(n\beta-\|Z_1^n\|^2)+2\beta\langle V^n,Z_1^n\rangle}{2\beta(\beta+\alpha P)},\\
&\imath_{\rms,2}:=\frac{\bar\alpha P(n\beta-\|Z_1^n\|^2)+2\beta\langle U^n,Z_1^n\rangle}{2\beta(\beta+\bar\alpha P)},\\
&\imath_{\rms,3}:=\frac{P(n\beta-\|Z_1^n\|^2)+2\beta(\langle Z_1^n,U^n+V^n\rangle+\langle U^n,V^n\rangle)}{2\beta(\beta+P)},
\end{align}
where $K_1$, $K_2$ and $K_{1,2}$ are finite constants that do not depend on $y_1^n$ or $y_2^n$. Recall the definitions of $g_1(t;y_1^n,u^n)$, $g_2(t;y_1^n,v^n)$, $g_{1,2}(t;y_2^n)$ and $g_3(t;y_2^n)$ in \eqref{def:g1}-\eqref{def:g3}. For simplicity, we define the following short-hand notation:
\begin{align}
g_1&:=g_1(\tilde\imath^n(V^n;Y_1^n|U^n);Y_1^n,U^n),\\
g_2&:=g_2(\tilde\imath^n(U^n;Y_1^n|V^n);Y_1^n,V^n),\\
g_{1,2}&:=g_{1,2}(\tilde\imath^n(V^n,U^n;Y_1^n);Y_1^n),\\
g_3&:=g_3(\tilde\imath^n(U^n;Y_2^n);Y_2^n).
\end{align}


With the above definitions, invoking Lemma \ref{theo:lemma_SEP_RCU_JNN}, we upper bound the error probability of user 1 as follows:
\begin{align}
&\nn\rmP_{\rme,1}^n\leq \mathbb{E}\Big[\min\Big\{1,(M_1-1)g_1+(M_2-1)g_2\\*
\nn&\qquad\quad+(M_1-1)(M_2-1)g_{1,2}\}\Big\}1\{\calS_{\rms}^c\}\Big]\\*
\nn&\qquad\quad+\mathbb{E}\Big[\min\Big\{1,(M_1-1)g_1+(M_2-1)g_2\\*
&\qquad\quad+(M_1-1)(M_2-1)g_{1,2}\}\Big\}1\{\calS_{\rms}\}\Big]\\
\nn&\leq \Pr\{\calS_{\rms}^c\}+M_1\mathbb{E} \big[g_11\{\tilde\imath^n(V^n;Y_1^n|U^n)\geq\log(M_1K_1)\}\big]\\*
\nn&\quad+M_2\mathbb{E}\big[g_21\{\tilde\imath^n(U^n;Y_1^n|V^n)\geq\log(M_2K_2)\}\big]\\*
&\quad+M_1M_2\mathbb{E}\big[g_{1,2}1\{\tilde\imath^n(V^n,U^n;Y_1^n)\geq\log(M_1M_2K_{1,2})\}\big] \label{SEP:JNN_use_union_defSs}\\
\nn&\leq\Pr\{\calS_{\rms}^c\}+\frac{1}{\sqrt{n}}\Big(M_1K_1\mathbb{E}\big[\exp\{-\tilde\imath^n(V^n;Y_1^n|U^n)\}\\*
\nn&\qquad1\{\tilde\imath^n(V^n;Y_1^n|U^n)\geq\log(M_1K_1)\}\\*
\nn&\quad+M_2K_2\mathbb{E}\big\{\exp\{-\tilde\imath^n(U^n;Y_1^n|V^n)\}\\*
\nn&\qquad1\{\tilde\imath^n(U^n;Y_1^n|V^n)\geq\log(M_2K_2)\}\\*
\nn&\quad+M_1M_2K_{1,2}\mathbb{E}\big\{\exp\{-\tilde\imath^n(V^n,U^n;Y_1^n)\}\\*
&\qquad1\{\tilde\imath^n(V^n,U^n;Y_1^n)\geq\log(M_1M_2K_{1,2})\}\big]\Big) \label{SEP:JNN_use_Lemma_g}\\
&\leq\Pr\{\calS_{\rms}^c\}+O\Big(\frac{1}{\sqrt{n}}\Big) \label{SEP:JNN_use_neqbound}\\
&\leq 1-\Pr\{\boldsymbol{\imath_{\rms}}\geq\boldsymbol{\tau_{\rms}}\}+O\Big(\frac{1}{\sqrt{n}}\Big). \label{SEP:JNN_use_tildeimath}
\end{align}
where \eqref{SEP:JNN_use_union_defSs} follows from the union bound and the definition of $\calS_{\rms}$ in \eqref{def:calSs}, \eqref{SEP:JNN_use_Lemma_g} follows from~\cite[Lemma 6]{yavas2021gaussian}, \eqref{SEP:JNN_use_neqbound} follows since $\mathbb{E}[\exp\{-a\}1\{a\geq b\}]\leq\exp\{-b\}$ for any variable $a$ and constant $b$, and \eqref{SEP:JNN_use_tildeimath} follows since $\boldsymbol{\tilde{\imath}_\rms} =\boldsymbol{\imath_{\rms}}+n\mathbf{C_\rms}$.

It follows from the definition of $\rmC(P)$ that,
\begin{align}
\rmC\Big(\frac{\bar\alpha P}{\beta}\Big) &>\rmC\Big(\frac{\bar\alpha P}{\alpha P+\beta}\Big) >\rmC\Big(\frac{\bar\alpha P}{\alpha P+1}\Big), \label{SEP:JNN_factM2}\\
\rmC\Big(\frac{P}{\beta}\Big)&=\rmC\Big(\frac{\alpha P}{\beta}\Big) +\rmC\Big(\frac{\bar\alpha P}{\alpha P+\beta}\Big), \label{SEP:JNN_factM1M2}
\end{align}
Recall that in \eqref{SEP:chooseM2}, we choose $M_2$ such that $\log M_2=n\rmC(\frac{\bar\alpha P}{\alpha P+1})-\sqrt{n}L_2$. If we choose $M_1$ such that $\log M_1=n\rmC(\frac{\alpha P}{\beta})-\sqrt{n}L_1$, for sufficiently large $n$, it follows that
\begin{align}
\boldsymbol{\tau_{\rms}}(2)&=n\rmC\Big(\frac{\bar\alpha P}{\alpha P+1}\Big) -n\rmC\Big(\frac{\bar\alpha P}{\beta}\Big)\\
&<0,\label{SEP:taos_2<0}\\
\boldsymbol{\tau_{\rms}}(3)&=n\rmC\Big(\frac{\alpha P}{\beta}\Big)+ n\rmC\Big(\frac{\bar\alpha P}{\alpha P+1}\Big)-n\rmC\Big(\frac{P}{\beta}\Big)\\
&<n\rmC\Big(\frac{\bar\alpha P}{\alpha P+1}\Big)-n\rmC\Big(\frac{\bar\alpha P}{\alpha P+\beta}\Big)\\
&<0. \label{SEP:taos_3<0}
\end{align}
Similar to \eqref{SEP:phi1_use_powerU} to \eqref{def:f_1_b123}, one can verify that $\boldsymbol{\imath_{\rms}}$ converges to $\mathbf{0_3}$. Invoking \eqref{SEP:taos_2<0} and \eqref{SEP:taos_3<0}, we obtain that
\begin{align}
\Pr\{\boldsymbol{\imath_{\rms}}(2,3)\geq \boldsymbol{\tau_{\rms}}(2,3)\}\to 1. \label{SEP:JNN_done_simplify}
\end{align}

Invoking \eqref{SEP:JNN_done_simplify}, we can simplify \eqref{SEP:JNN_use_tildeimath} as follows:
\begin{align}
\rmP_{\rme,1}^n&\leq  1-\Pr\{\boldsymbol{\imath_{\rms}}(1)\geq\boldsymbol{\tau_{\rms}}(1)\}+O\Big(\frac{1}{\sqrt{n}}\Big) \label{SEP:JNN_using_PrAandB}\\
\nn&=\Pr\Big\{\alpha P\|Z_1^n\|^2-n\alpha\beta P-2\beta\langle V^n,Z_1^n\rangle\geq\Delta_1\Big\}\\*
&\qquad\quad+O\Big(\frac{1}{\sqrt{n}}\Big). \label{SEP:phi_1_reduce}
\end{align}
where \eqref{SEP:JNN_using_PrAandB} follows from that $\Pr\{A\cap B\}\geq\Pr\{A\}-\Pr\{B^c\}$, and \eqref{SEP:phi_1_reduce} is identical to \eqref{SEP:phi1_use_powerU} and the {remaing proof steps are} thus omitted.

\section{Proof of Theorem~\ref{theo:JEP}}\label{sec:proof_JEP}
\subsection{Analysis Under SIC}
In the proof of  Theorem~\ref{theo:JEP}, we make extensive use of the definitions of information densities defined in \eqref{def:density_uy2}-\eqref{def:density_uvy}.

Similar to the case under SEP with JNN decoding, we need the following RCU bound under JEP with SIC decoding.
\begin{lemma}\label{theo:lemma_JEP_SIC}
There exist an $(n,M_1,M_2)$-SIC code such that the joint ensemble error probability satisfies
\begin{align}
\nn&\rmP_{\rme,\rmJ}^n\leq \mathbb{E}\Big[\min\Big\{1,(M_1-1)\Pr\{\tilde\imath^n(\barV^n;Y_1^n|U^n)\\*
\nn&\quad\geq\tilde\imath^n(V^n;Y_1^n|U^n)|V^n,U^n,Y_1^n\}+(M_2-1)\\*
\nn&\quad\Pr\{\tilde\imath^n(\barU^n;Y_1^n)\geq\tilde\imath^n(U^n;Y_1^n)|U^n,Y_1^n\}+(M_2-1)\\
&\quad\Pr\{\tilde\imath^n(\barU^n;Y_2^n)\geq\tilde\imath^n(U^n;Y_2^n)|U^n,Y_2^n\}\Big\}\Big],
\end{align}
where the joint distribution of $(U^n,\barU^n,V^n,\barV^n,Y_1^n,Y_2^n)$ satisfies
\begin{align}
\nn &P_{U^n,\barU^n,V^n,\barV^n,Y_1^n,Y_2^n}=P_{U^n}(u^n)P_{U^n}(\baru^n)P_{V^n}(v^n)P_{V^n}(\barv^n)\\*
&\quad\times P_{Y_1^n|U^nV^n}(y_1^n|u^nv^n)\times P_{Y_2^n|U^nV^n}(y_2^n|u^nv^n).
\end{align}
\end{lemma}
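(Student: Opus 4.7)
The plan is to mimic the decomposition used in Lemma~\ref{theo:lemma_SEP_RCU_JNN}, but with the three error events dictated by the JEP criterion in \eqref{def:PeJ}, namely
\begin{align}
\calE_1:=\{\hatW_1\neq W_1\},\quad \calE_2:=\{\barW_2\neq W_2\},\quad \calE_3:=\{\hatW_2\neq W_2\}.
\end{align}
By the union bound, $\rmP_{\rme,\rmJ}^n\leq\Pr\{\calE_1\}+\Pr\{\calE_2\}+\Pr\{\calE_3\}$, and at the very end we take the minimum with $1$ to recover the RCU form.

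The first step is to translate each of the NN decoding rules into an equivalent maximum mismatched information density rule, exactly as was done for user 2 in \eqref{SEP:phi_2_maxdensity} and for user 1 under SIC in \eqref{SEP:phi_1_maxdensity}. Specifically, \eqref{def:phi1_NN} is equivalent to maximizing $\tilde\imath^n(V^n(w_1);Y_1^n|U^n(\barW_2))$ over $w_1$, the TIN step \eqref{def:phi1_TIN} at user~1 is equivalent to maximizing $\tilde\imath^n(U^n(w_2);Y_1^n)$ over $w_2$ (note that here $V^n$ is absorbed into the effective noise, so the relevant density is the unconditional one at $Y_1^n$), and the NN rule \eqref{def:phi2_NN} at user~2 is equivalent to maximizing $\tilde\imath^n(U^n(w_2);Y_2^n)$ over $w_2$.

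Next I would run the standard RCU argument on each event separately. For $\calE_1$, averaging over the $M_1-1$ competing codewords $\barV^n$ (drawn independently from $P_{V^n}$) and conditioning on the transmitted pair $(V^n,U^n)$ and the output $Y_1^n$ yields the first term $(M_1-1)\Pr\{\tilde\imath^n(\barV^n;Y_1^n|U^n)\geq \tilde\imath^n(V^n;Y_1^n|U^n)\mid V^n,U^n,Y_1^n\}$; here one must verify that after the SIC subtraction the conditional density used for decoding is indeed the one conditioned on $U^n$, which is immediate from \eqref{SEP:phi_1_maxdensity}. For $\calE_2$, the same union‐bounding over $M_2-1$ competing $\barU^n$'s at user~1 produces the second term, where the relevant density is the unconditional $\tilde\imath^n(\cdot;Y_1^n)$ because the TIN step treats $V^n$ as noise. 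For $\calE_3$, an identical argument at user~2 produces the third term. Adding the three bounds and then taking $\min\{1,\cdot\}$ gives the statement.

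The main obstacle, and the point that requires care, is justifying the second term. Unlike the JNN case in Lemma~\ref{theo:lemma_SEP_RCU_JNN}, where both $(V^n,U^n)$ are jointly known, here user~1's TIN decoder does not use $V^n$, so the competing‐codeword analysis must be run against the \emph{marginal} density $\tilde\imath^n(U^n;Y_1^n)$. One has to be careful that the transmitted $V^n$ is still present in the effective received signal when computing this RCU term, and that the joint law of $(U^n,\barU^n,V^n,\barV^n,Y_1^n,Y_2^n)$ stated in the lemma is exactly what arises from the codebook generation in \eqref{def:PV^n}--\eqref{def:PU^n} combined with the channel law; the rest of the argument then follows standard RCU machinery as in~\cite[Theorem 16]{polyanskiy2010finite}, and I would omit those routine steps by analogy with Lemma~\ref{theo:lemma_SEP_RCU_JNN}.
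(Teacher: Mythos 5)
Your overall route---decompose the joint error event, use the equivalence between the NN rules and maximization of the mismatched densities $\tilde\imath^n(V^n;Y_1^n|U^n)$, $\tilde\imath^n(U^n;Y_1^n)$, $\tilde\imath^n(U^n;Y_2^n)$, then count independent competing codewords RCU-style---is the route the paper intends (it omits this proof, pointing to the proof of Lemma~\ref{theo:lemma_SEP_RCU_JNN}), and you correctly identify the subtle point that the TIN step at user 1 must be run against the \emph{marginal} density $\tilde\imath^n(U^n;Y_1^n)$. However, two steps as written would not deliver the stated bound. First, the order of operations: you apply the union bound to the unconditional probabilities $\Pr\{\calE_1\},\Pr\{\calE_2\},\Pr\{\calE_3\}$ and only ``at the very end'' take the minimum with $1$. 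Run that way you obtain at best a sum of three terms of the form $\mathbb{E}[\min\{1,(M_i-1)P_i\}]$ (or $\min\{1,\sum_i\mathbb{E}[(M_i-1)P_i]\}$), and since $\min\{1,a\}+\min\{1,b\}+\min\{1,c\}\geq\min\{1,a+b+c\}$, these are weaker than, and do not imply, the claimed $\mathbb{E}[\min\{1,(M_1-1)P_1+(M_2-1)P_2+(M_2-1)P_3\}]$. The RCU mechanism requires the union bound over competitors and the truncation at $1$ to be applied to the \emph{conditional} error probability given $(V^n,U^n,Y_1^n,Y_2^n)$, and only then averaged. This is not cosmetic: the subsequent JEP analysis extracts the single joint probability $\Pr\{\calS_{\rmj\rms}^{c}\}$, which becomes the bivariate Gaussian ccdf and hence the $\rmQ_{\mathrm{inv}}$ region of Theorem~\ref{theo:JEP}; a sum of per-event bounds would only yield a sum of marginal tail probabilities and a strictly smaller region.

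Second, your first event $\calE_1=\{\hatW_1\neq W_1\}$ cannot be bounded by $(M_1-1)\Pr\{\tilde\imath^n(\barV^n;Y_1^n|U^n)\geq\tilde\imath^n(V^n;Y_1^n|U^n)\mid V^n,U^n,Y_1^n\}$: when $\barW_2\neq W_2$, the SIC decoder subtracts the wrong codeword and maximizes $\tilde\imath^n(\,\cdot\,;Y_1^n|U^n(\barW_2))$, so an error in $\hatW_1$ need not correspond to any competitor beating the true $V^n$ under the \emph{true} $U^n$; your appeal to \eqref{SEP:phi_1_maxdensity} implicitly assumes $\barW_2=W_2$. The correct first event is $\{\hatW_1\neq W_1,\barW_2=W_2\}$ (the analogue of $\calE_{1,1}$ in the proof of Lemma~\ref{theo:lemma_SEP_RCU_JNN}); the residual cases are already absorbed by the $\{\barW_2\neq W_2\}$ term, so the union of events is unchanged. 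With these two repairs---intersecting the first event with $\{\barW_2=W_2\}$, and performing the union bound plus $\min\{1,\cdot\}$ conditionally before the outer expectation---your argument coincides with the paper's intended proof.
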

The proof of Lemma \ref{theo:lemma_JEP_SIC} is omitted due to similarity to Lemma~\ref{theo:lemma_SEP_RCU_JNN}.

For subsequent analyses of JEP under SIC decoding, we need the following notation:
\begin{align}
&\boldsymbol{\tilde{\imath}_{\rmj\rms}}:=\left[
\begin{array}{c}
\tilde\imath^n(V^n;Y_1^n|U^n)\\
\tilde\imath^n(U^n;Y_1^n)\\
\tilde\imath^n(U^n;Y_2^n)
\end{array}\right],
\mathbf{C_{\boldsymbol{\rmj\rms}}}:=\left[
\begin{array}{c}
\rmC\Big(\frac{\alpha P}{\beta}\Big)\\
\rmC\Big(\frac{\bar\alpha P}{\alpha P+\beta}\Big)\\
\rmC\Big(\frac{\bar\alpha P}{\alpha P+1}\Big)
\end{array}\right]\\
&\mathbf{M_{\boldsymbol{\rmj\rms}}}:=\left[
\begin{array}{c}
\log (M_1K_1)\\
\log (M_2K_2')\\
\log (M_2K_3)
\end{array}\right],\;
\boldsymbol{\tau_{\rmj\rms}}:=\mathbf{M_{\boldsymbol{\rmj\rms}}}- n\mathbf{C_{\boldsymbol{\rmj\rms}}}, \label{def:JEP_M_tau}\\
&\calS_{\boldsymbol{\rmj\rms}}:=\big\{\boldsymbol{\tilde{\imath}_{\rmj\rms}}\geq \mathbf{M_{\boldsymbol{\rmj\rms}}}\big\},\qquad
\boldsymbol{\imath_{\rmj\rms}}:=[\imath_{\rmj\rms,1}\quad\imath_{\rmj\rms,2}\quad\imath_{\rmj\rms,3}]^T, \label{def:calSJs}\\
&\imath_{\rmj\rms,1}:=\imath_{\rms,1},\\
\nn&\imath_{\rmj\rms,2}:=\frac{\bar\alpha P(n\beta-\|Z_1^n\|^2-2\langle V^n,Z_1^n\rangle)}{2(\beta+P)(\beta+\alpha P)}\\
&\qquad\quad+\frac{\langle U^n,Z_1^n\rangle+\langle V^n,U^n\rangle}{\beta+P},\\
\nn&\imath_{\rmj\rms,3}:=\frac{\bar\alpha P(n\beta-\|Z_2^n\|^2-2\langle V^n,Z_2^n\rangle)}{2(1+P)(1+\alpha P)}\\
&\qquad\quad+\frac{\langle U^n,Z_2^n\rangle+\langle V^n,U^n\rangle}{1+P},
\end{align}
where $K_2'$ is finite constant that do not depend on $y_1^n$ or $y_2^n$.

We now upper bound the error probability $\rmP_{\rme,\rmJ}^n$. Invoking Lemma~\ref{theo:lemma_JEP_SIC}, similarly to \eqref{SEP:JNN_use_union_defSs} to \eqref{SEP:JNN_use_tildeimath} except
\begin{itemize}
\item replacing $\boldsymbol{\tilde{\imath}_{\rms}}$ and $\boldsymbol{\imath_{\rms}}$ by $\boldsymbol{\tilde{\imath}_{\rmj\rms}}$ and $\boldsymbol{\imath_{\rmj\rms}}$, respectively,
\item replacing $\mathbf{C}_{\boldsymbol\rms}$ $\mathbf{M_{\boldsymbol\rms}}$, $\boldsymbol{\tau_{\rms}}$ and $\boldsymbol{\calS_{\rms}}$ by $\mathbf{C}_{\boldsymbol{\rmj\rms}}$, $\mathbf{M_{\boldsymbol{\rmj\rms}}}$, $\boldsymbol{\tau_{\rmj\rms}}$ and $\boldsymbol{\calS_{\rmj\rms}}$, respectively.
\end{itemize}
The joint error probability can be upper bounded as
\begin{align}
\rmP_{\rme,\rmJ}^n\leq 1-\Pr\{\boldsymbol{\imath_{\rmj\rms}}\geq\boldsymbol{\tau_{\rmj\rms}}\}+O\Big(\frac{1}{\sqrt{n}}\Big). \label{JEP:phi_1_SIC_using_yiming}
\end{align}

If we choose $\log M_2=n\rmC(\frac{\bar\alpha P}{\alpha P+1})-\sqrt{n}L_2$ and invoke the fact described in \eqref{SEP:JNN_factM2}, it follows that
\begin{align}
\boldsymbol{\tau_{\rmj\rms}}(2)=n\rmC\Big(\frac{\bar\alpha P}{\alpha P+1}\Big) -n\rmC\Big(\frac{\bar\alpha P}{\alpha P+\beta}\Big)< 0.
\end{align}
Similar to \eqref{SEP:phi1_use_powerU} to \eqref{def:f_1_b123}, one can verify that $\boldsymbol{\imath_{\rmj\rms}}$ converges to $\mathbf{0_3}$. Thus, we obtain that
\begin{align}
\Pr\{\boldsymbol{\imath_{\rmj\rms}}(2)\geq \boldsymbol{\tau_{\rmj\rms}}(2)\}\to 1.
\end{align}
Thus, we can simplify \eqref{JEP:phi_1_SIC_using_yiming} as follows:
\begin{align}
\rmP_{\rme,\rmJ}^n\leq 1-\Pr\{\boldsymbol{\imath_{\rmj\rms}}(1,3)\geq\boldsymbol{\tau_{\rmj\rms}}(1,3)\} +O\Big(\frac{1}{\sqrt{n}}\Big). \label{JEP:SIC_simplify}
\end{align}
We subsequently lower bound the term $\Pr\{\boldsymbol{\imath_{\rmj\rms}}(1,3)\geq\boldsymbol{\tau_{\rmj\rms}}(1,3)\}$. We can denote the spherical codewords by $V^n=\sqrt{n\alpha P}\frac{\tilde{V}^n}{\|\tilde{V}^n\|}$ and $U^n=\sqrt{n\bar\alpha P}\frac{\tilde{U}^n}{\|\tilde{U}^n\|}$, where $\tilde{V}^n\sim\calN(\mathbf{0},\mathbf{I}_n)$ and $\tilde{U}^n\sim\calN(\mathbf{0},\mathbf{I}_n)$. Furthermore, for all $i\in[n]$, let $\mathbf{A}_i:=(A_{1,i},\ldots,A_{8,i})$, where
\begin{align}
\begin{array}{ll}
A_{1,i}:=\beta-Z_{1,i}^2, & A_{2,i}:=\sqrt{\alpha P}\tilde{V}_iZ_{1,i}, \\ A_{3,i}:=\tilde{V}_i^2-1,& A_{4,i}:=1-Z_{2,i}^2,\\
A_{5,i}:=\sqrt{\bar\alpha P}\tilde{U}_iZ_{2,i},& A_{6,i}:=\tilde{U}_i^2-1,\\
A_{7,i}:=\sqrt{\bar\alpha \alpha P^2}\tilde{V}_i\tilde{U}_i, & A_{8,i}:=\sqrt{\alpha P}\tilde{V}_iZ_{2,i}.
\end{array}
\end{align}
We next define the following function  $\mathbf{f_{\rmJ}}:=[f_{\rmj,1}(\mathbf{a}),f_{\rmj,2}(\mathbf{a})]^T$, where
\begin{align}
f_{\rmJ,1}(\mathbf{a}):=&\frac{1}{2\beta(\beta+\alpha P)}\Big(\alpha Pa_1+\frac{2\beta a_2}{\sqrt{1+a_3}}\Big),\\
\nn f_{\rmJ,2}(\mathbf{a}):=&\frac{1}{2(1+P)(1+\alpha P)}\bigg(\bar\alpha P\Big(a_4-\frac{2a_8}{\sqrt{1+a_3}}\Big)\\
&+2(1+\alpha P)\Big(\frac{a_5}{\sqrt{1+a_6}}+\frac{a_7}{\sqrt{1+a_6}\sqrt{1+a_3}}\Big)\bigg).
\end{align}
One can verify that $\mathbf{f}_\rmJ(\frac{1}{n}\sum_{i=1}^{n}\mathbf{A}_i)=\frac{1}{n} \boldsymbol{\imath_{\rmj\rms}}(1,3)$.

The variances of each of the random variables $A_{1,i},\ldots,A_{8,i}$ and the Jacobian of $\mathbf{f}_{\rmJ}$ around $\mathbb{E}[\mathbf{A}_1]=\mathbf{0}_8$ with respect to each of its entries are as follows:
\begin{align}
\begin{array}{cll}
\mathrm{Random\;variable} & \mathrm{Variance} & \mathrm{Jacobian\;entry}\\
A_1 & \zeta_1-\beta^2 & k_1\alpha P\\
A_2 & \alpha\beta P & 2k_1\beta\\
A_3 & 2 & 0\\
A_4 & \zeta_2-1 & k_2\alpha P\\
A_5 & \bar\alpha P & 2K_2(1+\alpha P)\\
A_6 & 2 & 0\\
A_7 & \alpha\bar\alpha P^2 & 2k_2(1+\alpha P)\\
A_8 & \alpha P & -2\bar\alpha P.\\
\end{array}\label{JEP:SIC_variance}
\end{align}
The covariance matrix $\mathbf{V}_{\rmJ}$ of the random vector $\mathbf{A}$ is diagonal and the Jacobian matrix can be denoted by $\mathbf{J}_{\rmJ}=(J_{\rmJ,1},J_{\rmJ,2})^T$, where
\begin{align}
J_{\rmJ,1}&:=[k_1\alpha P\;2k_1\beta\;0\;0\;0\;0\;0\;0],\\
J_{\rmJ,2}&:=[0\;0\;0\;k_2\alpha P\;2K_2(1+\alpha P)\;0\;2k_2(1+\alpha P)\; -2\bar\alpha P]. \label{JEP:SIC_JJ2}
\end{align}
Combining \eqref{JEP:SIC_variance}-\eqref{JEP:SIC_JJ2}, we have
\begin{align}
\mathbf{J}_{\rmJ}\mathbf{V}_{\rmJ}\mathbf{J}_{\rmJ}^T&=
\begin{bmatrix}
\rmV_1(\alpha P) & 0 \\
0 & \rmV_2(\bar\alpha P,\alpha P)
\end{bmatrix}\\
&=\mathbf{V}\big(\alpha,P\big).
\end{align}
where $\rmV_1(\alpha P)$, $\rmV_2(\bar\alpha P,\alpha P)$ and $\mathbf{V}\big(\alpha,P\big)$ are defined after \eqref{def:V2()}.

By the Berry-Esseen Theorem for functions of random vectors~\cite[Proposition 1]{molavianjazi2015second}, there exists a finite positive constant $B$ such that
\begin{align}
\nn&\left|\Pr \left\{\mathbf{f}_{\rmJ}\left(\frac{1}{n}\sum_{i=1}^{n} \mathbf{A}_{i}\right)\geq\frac{\boldsymbol{\tau_{\rmj\rms}}(1,3)}{n}\right\} -\Pr\left\{\mathbf{S}_2\geq \frac{\boldsymbol{\tau_{\rmj\rms}}(1,3)}{\sqrt{n}}\right\}\right|\\*
&\quad \leq\frac{B}{n^{\frac{1}{4}}},
\end{align}
where $\mathbf{S}_2\sim \calN\big(\mathbf{0}_2,\mathbf{V}(\alpha,P)\big)$. Invoking the fact that $\mathbf{f}_\rmJ(\frac{1}{n}\sum_{i=1}^{n}\mathbf{A}_i)=\frac{1}{n} \boldsymbol{\imath_{\rmj\rms}}(1,3)$, it follows that
\begin{align}
\nn&\Pr\{\boldsymbol{\imath_{\rmj\rms}}(1,3)\geq\boldsymbol{\tau_{\rmj\rms}}(1,3)\}- \Pr\Big\{\mathbf{S}_2 \geq\frac{\boldsymbol{\tau_{\rmj\rms}}(1,3)}{\sqrt{n}}\Big\}\\*
&\qquad\geq O\Big(n^{-\frac{1}{4}}\Big). \label{JEP:SIC_using_A=tau}
\end{align}
Combining \eqref{JEP:SIC_simplify} and \eqref{JEP:SIC_using_A=tau}, the joint error probability can be finally upper bounded by
\begin{align}
\rmP_{\rme,\rmJ}^n&\leq 1-\Pr\Big\{\mathbf{S}_2\geq\frac{\boldsymbol{\tau_{\rmj\rms}}(1,3)}{\sqrt{n}}\Big\} +O\Big(n^{-\frac{1}{4}}\Big)\\
&= 1-\Pr\Big\{\mathbf{S}_2\leq-\frac{\boldsymbol{\tau_{\rmj\rms}}(1,3)}{\sqrt{n}}\Big\} +O\Big(n^{-\frac{1}{4}}\Big).\label{JEP:SIC_combine_decomp_BE}
\end{align}
Recall the definition of $\rmQ_{\mathrm{inv}}(\mathbf{V},\varepsilon)$ in \eqref{def:Qinv}, let
\begin{align}
-\frac{\boldsymbol{\tau_{\rmj\rms}}(1,3)}{\sqrt{n}}\in\rmQ_{\mathrm{inv}} \left(\mathbf{V}(\alpha,P), \varepsilon-O\Big(n^{-\frac{1}{4}}\Big)\right).
\end{align}
Recall the definition of achievable pair $(L_1,L_2)$ in Def. \ref{def:JEP_achL1L2}, the proof is completed by applying the Taylor expansion of $\rmQ_{\mathrm{inv}} (\cdot)$ around $\varepsilon$. 

\subsection{Analysis Under JNN}
To prove the results under JNN decoding, we need the following RCU bound.
\begin{lemma}\label{theo:lemma_JEP_JNN}
There exist an $(n,M_1,M_2)$-JNN code such that the joint ensemble error probability satisfies
\begin{align}
\nn&\rmP_{\rme,\rmJ}^n\leq \mathbb{E}\Big[\min\Big\{1,(M_1-1)\Pr\big\{\tilde\imath^n(\barV^n;Y_1^n|U^n)\\*
\nn&\quad\geq\tilde\imath^n(V^n;Y_1^n|U^n)|V^n,U^n,Y_1^n\big\}+(M_2-1)\\*
\nn&\quad\Pr\big\{\tilde\imath^n(\barU^n;Y_1^n|V^n)\geq\tilde\imath^n(U^n;Y_1^n|V^n)|U^n,V^n,Y_1^n\big\}\\ \nn&\quad+(M_1-1)(M_2-1)\Pr\big\{\tilde\imath^n(\barU^n,\barV^n;Y_1^n)\\*
&\quad\geq\tilde\imath^n(U^n,V^n;Y_1^n)|U^n,V^n,Y_1^n\big\}+(M_2-1)\\
&\quad\Pr\big\{\tilde\imath^n(\barU^n;Y_2^n)\geq\tilde\imath^n(U^n;Y_2^n)|U^n,Y_2^n\big\} \Big\}\Big],
\end{align}
where the joint distribution of $(U^n,\barU^n,V^n,\barV^n,Y_1^n,Y_2^n)$ satisfies
\begin{align}
\nn &P_{U^n,\barU^n,V^n,\barV^n,Y_1^n,Y_2^n}=P_{U^n}(u^n)P_{U^n}(\baru^n)P_{V^n}(v^n)P_{V^n}(\barv^n)\\*
&\quad\times P_{Y_1^n|U^nV^n}(y_1^n|u^nv^n)\times P_{Y_2^n|U^nV^n}(y_2^n|u^nv^n).
\end{align}
\end{lemma}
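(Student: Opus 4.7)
The plan is to combine the single-receiver RCU bound for user $2$ developed in Section~\ref{sec:phi_2} with the JNN RCU bound for user $1$ derived in Lemma~\ref{theo:lemma_SEP_RCU_JNN}, since under JEP the error event is the union of the error events at the two receivers. Concretely, write
\begin{align}
\rmP_{\rme,\rmJ}^n \leq \Pr\{\hatW_1\neq W_1\;\mathrm{or}\;\barW_2\neq W_2\} + \Pr\{\hatW_2\neq W_2\},
\end{align}
so that the first term is exactly $\rmP_{\rme,1}^n$ under the JNN decoder (which Lemma~\ref{theo:lemma_SEP_RCU_JNN} already controls by the three ``user-$1$ type'' information-density terms) and the second term is $\rmP_{\rme,2}^n$, which is bounded by the single ``user-$2$ type'' information-density term as in \eqref{SEP:phi2_useRCU}.

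More operationally, by symmetry of the random code over the messages I may condition on $(W_1,W_2)=(1,1)$ and decompose the joint error event into four (not necessarily disjoint) subevents: (i) $\{\exists\, w_1\neq 1:\ \tilde\imath^n(V^n(w_1);Y_1^n|U^n(1))\geq \tilde\imath^n(V^n(1);Y_1^n|U^n(1))\}$, corresponding to $\hatW_1\neq W_1$ while $\barW_2=W_2$; (ii) the analogous event with the roles of $V^n$ and $U^n$ swapped, corresponding to $\barW_2\neq W_2$ while $\hatW_1=W_1$; (iii) the joint event $\{\exists\,(w_1,w_2)\neq (1,1):\ \tilde\imath^n(V^n(w_1),U^n(w_2);Y_1^n)\geq \tilde\imath^n(V^n(1),U^n(1);Y_1^n)\}$, corresponding to both being wrong at user $1$; and (iv) $\{\exists\, w_2\neq 1:\ \tilde\imath^n(U^n(w_2);Y_2^n)\geq \tilde\imath^n(U^n(1);Y_2^n)\}$, corresponding to $\hatW_2\neq W_2$ at user $2$. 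Here the equivalence between the NN rules in \eqref{def:phi2_NN}--\eqref{def:phi1_NN} (and their JNN counterpart) and the maximum mismatched-information-density rules follows exactly as in \eqref{SEP:phi_2_maxdensity}, \eqref{SEP:phi_1_maxdensity} and the discussion after Lemma~\ref{theo:lemma_SEP_RCU_JNN}.

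A union bound over these four subevents, followed by the standard RCU trick of upper bounding the probability by $\min\{1,\cdot\}$ and then a union bound over the $M_1-1$, $M_2-1$, and $(M_1-1)(M_2-1)$ incorrect codeword(s) in the inner pairwise error probabilities, yields the four-term bound in the statement, with the inner pairwise probabilities computed under the product pairing $(U^n,\barU^n,V^n,\barV^n)\sim P_{U^n}P_{U^n}P_{V^n}P_{V^n}$ and $(Y_1^n,Y_2^n)$ generated from the true pair $(U^n,V^n)$ through the two channels. The mild step worth spelling out is that the two receiver outputs $Y_1^n$ and $Y_2^n$ are conditionally independent given $(U^n,V^n)$ because $Z_1^n$ and $Z_2^n$ are independent; this justifies the product form of the joint distribution of $(U^n,\barU^n,V^n,\barV^n,Y_1^n,Y_2^n)$ in the statement.

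The only mildly subtle point, which is the main obstacle, is ensuring that the union-bound decomposition is tight enough to produce exactly these four terms rather than additional spurious ones: in particular one must verify that the error events ``$\hatW_1\neq W_1,\,\barW_2=W_2$'' and ``$\hatW_1=W_1,\,\barW_2\neq W_2$'' at the JNN decoder are correctly captured by pairwise comparisons in which only $V^n$ (respectively only $U^n$) is varied. This is a direct consequence of the $\argmin$ definition of the JNN decoder: if the decoder's output differs from $(1,1)$ in only one coordinate, then the winning pair must in particular beat $(1,1)$ in the information-density comparison with the other coordinate held fixed. The remaining calculations are routine, being formally identical to those of Lemma~\ref{theo:lemma_SEP_RCU_JNN} with the one extra user-$2$ term inherited from \eqref{SEP:phi2_useRCU}.
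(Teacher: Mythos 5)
Your proposal is correct and takes essentially the same route as the paper, which omits this proof as being analogous to Lemma~\ref{theo:lemma_SEP_RCU_JNN}: condition on $(W_1,W_2)=(1,1)$, decompose the joint error event into the three user-1 subevents plus the user-2 subevent, use the equivalence between the NN/JNN rules and maximization of the mismatched information densities, and conclude with the RCU-style conditional union bound over competing codewords (the product form of the joint law following from the independence of $\barU^n,\barV^n$ and of the two noise vectors). One small caution: your opening plan of bounding $\rmP_{\rme,\rmJ}^n\le \rmP_{\rme,1}^n+\rmP_{\rme,2}^n$ and invoking the two earlier RCU bounds separately would only yield a sum of two $\min\{1,\cdot\}$ terms, which is weaker than the stated single $\min\{1,\cdot\}$ applied to the four-term sum; it is your subsequent unified derivation, keeping all four pairwise terms inside one conditional union bound before capping at $1$, that actually establishes the lemma as stated.
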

The proof of Lemma~\ref{theo:lemma_JEP_JNN} is omitted due to similarity to  Lemma~\ref{theo:lemma_SEP_RCU_JNN}.

For the rest of analyses, we need the following notation:
\begin{align}
&\boldsymbol{\tilde{\imath}_{\rmj\rmj}}:=\left[
\begin{array}{c}
\tilde\imath^n(V^n;Y_1^n|U^n)\\
\tilde\imath^n(U^n;Y_1^n|V^n)\\
\tilde\imath^n(V^n,U^n;Y_1^n)\\
\tilde\imath^n(U^n;Y_2^n)
\end{array}\right],\;
\mathbf{C_{\boldsymbol{\rmj\rmj}}}:=\left[
\begin{array}{c}
\rmC\Big(\frac{\alpha P}{\beta}\Big)\\
\rmC\Big(\frac{\bar\alpha P}{\beta}\Big)\\
\rmC\Big(\frac{P}{\beta}\Big)\\
\rmC\Big(\frac{\bar\alpha P}{\alpha P+1}\Big)
\end{array}\right],\\
&\mathbf{M_{\boldsymbol{\rmj\rmj}}}:=\left[
\begin{array}{c}
\log (M_1K_1)\\
\log (M_2K_2)\\
\log (M_1M_2K_{1,2})\\
\log (M_2K_3)
\end{array}\right],\;
\boldsymbol{\tau_{\rmj\rmj}}:=\mathbf{M_{\boldsymbol{\rmj\rmj}}}- n\mathbf{C_{\boldsymbol{\rmj\rmj}}}, \label{def:JEP_JNN_M_tau}\\
&\calS_{\boldsymbol{\rmj\rmj}}:=\big\{\boldsymbol{\tilde{\imath}_{\rmj\rmj}}\geq \mathbf{M_{\boldsymbol{\rmj\rmj}}}\big\},\qquad
\boldsymbol{\imath_{\rmj\rmj}}:=[\imath_{\rmj\rmj,1}\quad\imath_{\rmj\rmj,2} \quad\imath_{\rmj\rmj,3}\quad\imath_{\rmj\rmj,4}]^T, \label{def:calSJJ}
\end{align}
\begin{align}
&\imath_{\rmj\rmj,1}:=\frac{\alpha P(n\beta-\|Z_1^n\|^2)+2\beta\langle V^n,Z_1^n\rangle}{2\beta(\beta+\alpha P)},\\
&\imath_{\rmj\rmj,2}:=\frac{\bar\alpha P(n\beta-\|Z_1^n\|^2)+2\beta\langle U^n,Z_1^n\rangle}{2\beta(\beta+\bar\alpha P)},\\
&\imath_{\rmj\rmj,3}:=\frac{P(n\beta-\|Z_1^n\|^2)+2\beta(\langle Z_1^n,U^n+V^n\rangle+\langle U^n,V^n\rangle)}{2\beta(\beta+P)},\\
\nn&\imath_{\rmj\rmj,4}:=\frac{\bar\alpha P(n\beta-\|Z_2^n\|^2-2\langle V^n,Z_2^n\rangle)}{2(1+P)(1+\alpha P)}\\
&\qquad\quad+\frac{\langle U^n,Z_2^n\rangle+\langle V^n,U^n\rangle}{1+P},
\end{align}
Invoking Lemma~\ref{theo:lemma_JEP_JNN}, similarly to \eqref{SEP:JNN_use_union_defSs} to \eqref{SEP:JNN_use_tildeimath} except that
\begin{itemize}
\item replacing $\boldsymbol{\tilde{\imath}_{\rms}}$ and $\boldsymbol{\imath_{\rms}}$ by $\boldsymbol{\tilde{\imath}_{\rmj\rmj}}$ and $\boldsymbol{\imath_{\rmj\rmj}}$, respectively,
\item replacing $\mathbf{C}_{\boldsymbol\rms}$ $\mathbf{M_{\boldsymbol\rms}}$, $\boldsymbol{\tau_{\rms}}$ and $\boldsymbol{\calS_{\rms}}$ by $\mathbf{C}_{\boldsymbol{\rmj\rmj}}$, $\mathbf{M_{\boldsymbol{\rmj\rmj}}}$, $\boldsymbol{\tau_{\rmj\rmj}}$ and $\boldsymbol{\calS_{\rmj\rmj}}$, respectively,
\end{itemize}
we upper bound the joint error probability as
\begin{align}
\rmP_{\rme,\rmJ}^n\leq 1-\Pr\{\boldsymbol{\imath_{\rmj\rmj}}\geq\boldsymbol{\tau_{\rmj\rmj}}\}+O\Big(\frac{1}{\sqrt{n}}\Big). \label{JEP:phi_1_JNN_using_yiming}
\end{align}
For any $\beta\in(0,1)$, recall the fact in \eqref{SEP:JNN_factM2} and \eqref{SEP:JNN_factM1M2}. If we choose $M_1$ and $M_2$ such that $\log M_1=n\rmC(\frac{\alpha P}{\beta})-\sqrt{n}L_1$ and $\log M_2=n\rmC(\frac{\bar\alpha P}{\alpha P+1})-\sqrt{n}L_2$, for sufficiently large $n$, it follows that
\begin{align}
\boldsymbol{\tau_{\rmj\rmj}}(2)&=n\rmC\Big(\frac{\bar\alpha P}{\alpha P+1}\Big) -n\rmC\Big(\frac{\bar\alpha P}{\beta}\Big)<0,\label{JEP:JNN_factM2}\\
\boldsymbol{\tau_{\rmj\rmj}}(3)&=n\rmC\Big(\frac{\alpha P}{\beta}\Big)+ n\rmC\Big(\frac{\bar\alpha P}{\alpha P+1}\Big)-n\rmC\Big(\frac{P}{\beta}\Big)\\
&<n\rmC\Big(\frac{\bar\alpha P}{\alpha P+1}\Big)-n\rmC\Big(\frac{\bar\alpha P}{\alpha P+\beta}\Big)<0. \label{JEP:JNN_donesimplify}
\end{align}
Similar to \eqref{SEP:phi1_use_powerU} to \eqref{def:f_1_b123}, one can verify that $\boldsymbol{\imath_{\rms}}$ converges to $\mathbf{0_3}$. Thus, we obtain that
\begin{align}
\Pr\{\boldsymbol{\imath_{\rmj\rmj}}(2,3)\geq \boldsymbol{\tau_{\rmj\rmj}}(2,3)\}\to 1. \label{JEP:JNN_done_simplify}
\end{align}
Invoking \eqref{JEP:JNN_done_simplify}, similar to \eqref{SEP:JNN_using_PrAandB},  we can simplify \eqref{JEP:phi_1_JNN_using_yiming} as follows:
\begin{align}
\rmP_{\rme,\rmJ}^n\leq 1-\Pr\{\boldsymbol{\imath_{\rmj\rmj}}(1,4)\geq\boldsymbol{\tau_{\rmj\rmj}}(1,4)\} +O\Big(\frac{1}{\sqrt{n}}\Big). \label{JEP:JNN_simplify}
\end{align}
Since \eqref{JEP:JNN_simplify} is identical to \eqref{JEP:SIC_simplify}, the rest of proof is thus omitted.

\section{Proof Sketch of Theorem~\ref{theo:fading_SEP}}\label{sec:proof_fading}
The proof of Theorem~\ref{theo:fading_SEP} based on the similar steps in Section~\ref{sec:proof_sketch_SEP}. However, the existing of the fading parameter leads to some additional constraints. Thus, for the smooth presentation, we provide the analysis of $\rmP_{\rme,2}^n$ and we only provide the sketch for the analysis of $\rmP_{\rme,1}^n$.
\subsection{Analysis of Error Probability for User 2}
Similar to the case without fading, ``mismatched fading'' information densities are critical to our analysis. Specifically, given any $(u,y_2,h_2)\in\bbR^3$, let
\begin{align}
\nn&\tilde\imath(u;y_2|h_2)\\*
&:=\log\frac{\calN(y_2;h_2u,1+h_2^2\alpha P)}{\calN(y;0,1+h_2^2P)}\\
&=\rmC\left(\frac{h_2^2\bar\alpha P}{h_2^2\alpha P+1}\right)+\frac{y_2^2}{2(h_2^2P+1)}-\frac{(y_2-h_2u)^2}{2(h_2^2\alpha P+1)}, \label{def:fading_density_uy2}
\end{align}
which is the information density of the quasi-static fading Gaussian channel $\calN(y_2;h_2u,1+h_2^2\alpha P)$. For all $n\in\bbN$, we define the following $n$-letter version of the mismatched fading information density
\begin{align}
\tilde\imath^n(u^n;y_2^n|h_2):=\sum_{i=1}^n\tilde\imath(u_i;y_{2,i}|h_2).
\end{align}

Firstly, we analyze the error probability of user 2. Similar to the case without fading, the nearest neighbor rule is equivalent to maximizing $\tilde\imath^n(u^n(w_2);y_2^n|h_2)$ over $w_2\in[M_2]$, i.e.,
\begin{align}
\phi_2(Y_2^n)&=\argmin_{w_2\in[M_2]}\|Y_2^n-H_2U^n(w_2)\|^2\\
\nn&=\argmax_{w_2\in[M_2]}n\rmC_2(H_2,\alpha P,\bar\alpha P)+\frac{\|Y_2^n\|^2}{2(H_2^2P+1)}\\*
&\quad-\frac{\|Y_2^n-H_2U^n(w_2)\|^2}{2(H_2^2\alpha P+1)}. \label{SEP:fading_phi_2_maxdensity}
\end{align}

Proceeding with the similar steps in Section~\ref{sec:phi_2}, the error probability of user 2 can be upper bounded as
\begin{align}
\nn&\rmP_{\rme,2}^n\\*
\nn&\leq\mathbb{E}_{H_2}\bigg[\Pr\Big\{\frac{\|H_2V^n+Z_2^n\|^2}{2(H_2^2\alpha P+1)}-\frac{\|H_2U^n+H_2V^n+Z_2^n\|^2}{2(H_2^2P+1)}\\*
&\qquad\quad\geq \frac{\Delta_f}{2(H_2^2\alpha P+1)(H_2^2P+1)}\Big\}+\frac{1}{\sqrt{n}}\bigg], \label{SEP:fading_phi2_use_barZ&Delta}
\end{align}
where
\begin{align}
\nn\Delta_f&:=\bigg(\rmC_2(H_2,\alpha P,\bar\alpha P)-\log M_2-\log(K_3\sqrt{n})\bigg)\\*
&\qquad\times2(1+H_2^2P)(1+H_2^2\alpha P). \label{def:Delta_f}
\end{align}
Then, by multiplying $2(H_2^2\alpha P+1)(H_2^2P+1)$ at the both side of the probability term, the left-hand side of the probability term can be expanded as
\begin{align}
\nn&2(1+H_2^2P)\|H_2V^n+Z_2^n\|^2-(1+H_2^2\alpha P)\\*
\nn&\quad\times\|H_2U^n+H_2V^n+Z_2^n\|^2\\
\nn&=H_2^2\bar\alpha P\big(-n+\|Z_2^n\|^2+2\langle H_2V^n,Z_2^n\rangle\big) -2(1+H_2^2\alpha P)\\*
&\quad\times\big(\langle H_2U^n,H_2V^n\rangle+\langle H_2U^n,Z_2^n\rangle\big), \label{SEP:fading_phi2_expend}
\end{align}
since $\|U^n\|^2=n\bar\alpha P$ and $\|V^n\|^2=n\alpha P$.

In order to apply the functional Berry-Esseen Theorem, we write \eqref{SEP:fading_phi2_expend} in terms of a function of random variables. Recall that $V^n=\sqrt{n\alpha P}\frac{\tilde{V}^n}{\|\tilde{V}^n\|}$ and $U^n=\sqrt{n\bar\alpha P}\frac{\tilde{U}^n}{\|\tilde{U}^n\|}$, where $\tilde{V}^n\sim\calN(\mathbf{0},\mathbf{I}_n)$ and $\tilde{U}^n\sim\calN(\mathbf{0},\mathbf{I}_n)$. For $i\in[n]$, let 
\begin{align}
&A_{1,i}^F:=H_2^2\bar\alpha P(1-Z_2^2),\\
&A_{2,i}^F:=\sqrt{\bar\alpha P}(1+H_2^2\alpha P)H_2\tilde{U}_iZ_{2,i},\\
&A_{3,i}^F:=\tilde{U}_i^2-1,\\
&A_{4,i}^F:=\sqrt{\alpha\bar\alpha P^2}(1+H_2^2\alpha P)H_2^2\tilde{U}_i\tilde{V}_i,\\
&A_{5,i}^F:=\sqrt{\alpha P}(1+H_2^2\alpha P)H_2\tilde{V}_iZ_{2,i},\\
&A_{6,i}^F:=\tilde{V}_i^2-1.
\end{align}

One can then verify that $\mathbf{A}_i^F$ is zero-mean. With the assumption of $H_2$ in \eqref{def:fading_SEP_assump_H}, we conclude that $\mathbb{E}[\|\mathbf{A}_i^F\|^3]<\infty$, which is a critical condition for applying Berry-Esseen Theorem.

We then define the function
\begin{align}
\nn f_F(a_1^F,\dots,a_6^F)&:=a_1^F-\frac{a_5^F}{\sqrt{1+a_6^F}}+2\left( \frac{a_2^F}{\sqrt{1+a_3^F}}\right.\\*
&\quad\left.+\frac{a_4^F}{\sqrt{1+a_3^F}\sqrt{1+a_6^F}}\right).
\end{align}
The right-hand side of \eqref{SEP:fading_phi2_expend} can be denoted by $-nf_F\left(\frac{1}{n}\sum_{i=1}^n\mathbf{A}_i^F\right)$.

We then apply the Berry-Esseen Theorem for functions in \eqref{SEP:fading_phi2_use_barZ&Delta} and invoke \eqref{SEP:fading_phi2_expend}, it follows that
\begin{align}
\nn&\rmP_{\rme,2}^n\leq O\Big(\frac{1}{\sqrt{n}}\Big)\\
&+\mathbb{E}_{H_2}\left[\rmQ\left(\frac{n\rmC_2(H_2,\alpha P,\bar\alpha P)-\log M_2-\log(K_3\sqrt{n})}{\sqrt{n \mathrm{V}_2(H_2^2\bar\alpha P,H_2^2\alpha P)}}\right)\right].
\end{align}
We then perform a first-order Taylor expansion of $\rmQ(\cdot)$, by noting the fact that $\rmV_2(H_2^2\bar\alpha P,H_2^2\alpha P)=O(1)$, we complete the proof of \eqref{theo:fading_Pe2}.

\subsection{Analysis Sketch of User 1}
The proof of User 1 under SIC and JNN decoding scheme follows from the similar steps in Section~\ref{sec:proof_SEP_1_SIC} and Section~\ref{sec:proof_SEP_1_JNN}, respectively. Specifically, the proof steps are different in the following aspects:
\begin{itemize}
\item The analyses are based on ``mismatched fading'' information densities;
\item Applying the moment constraint $\mathbb{E}[H_1^9]<\infty$ when using the Berry-Esseen Theorem.
\end{itemize}

\section{Conclusion}\label{sec:conclusion}
We derived achievable second-order asymptotics for mismatched asymmetric broadcast channel, where the transmitter uses spherical codebooks and the decoders apply nearest neighbor decoder and its variations. Our results show that using SIC and JNN decoding for the strong user leads to identical second-order performance, which advocates SIC for practical use since it has much lower computational complexity. Our results also show that the JEP criterion is superior than the SEP criterion by having a larger second-order achievable region. We also generalized our results to the case with quasi-static fading and verified that outage capacity region is an accurate notation even at finite blocklength. 

There are several avenues for future research. Firstly, we only derived achievability results in this paper. However, in order to check whether our results are ensemble tight, it is worthwhile to derive the corresponding ensemble converse results for the mismatched ABC. Towards this goal, the techniques in \cite{scarlett2017mismatch,zhou2018refined,zhou2019jscc} might be helpful.  Secondly, for the case with quasi-static fading, we assumed that perfect CSI is available at decoders. However, in practice, perfect CSI might not be available. Thus, it is worthwhile to generalize our results to the case with knowledge of CSI statistics or even with no CSI at both receivers, potentially using tools from~\cite{yang2014quasi,Li2014outageDNOMA,wu2014BCtsp}. Finally, we only considered channel coding in this paper. For practical applications, joint source-channel coding (JSCC) is needed. Towards this goal, it is of great value to study a mismatched JSCC setting for multiple users. To do so, one might combine the results in this paper with the source coding counterpart of successive refinement~\cite{rimoldi1994,wu2023TIT} to construct a JSCC code via unequal error protection~\cite{zhou2019jscc,wang2011dispersion}.

\bibliographystyle{IEEEtran}
\bibliography{IEEEfull_fei}

\end{document}